\pgfplotsset{
compat=1.17,
mystyle/.style={
    scale only axis,
    width=0.55\columnwidth,
    height=0.4\columnwidth,
    label style={inner sep=0, font=\normalsize}, 
    tick label style={font=\scriptsize},
    legend style={font=\scriptsize},
    mark size=3,
    major grid style={dashed},
    line width=0.8pt,
    axis line style = thin}
}
\pgfplotsset{
  compat=1.17,
    every axis/.append style={
        scale only axis,
  width=0.55\columnwidth,
  height=0.4\columnwidth,
  label style={inner sep=0, font=\normalsize}, 
  tick label style={font=\scriptsize},
  legend style={font=\scriptsize},
  mark size=3,
  major grid style={dashed},
  line width=0.8pt,
  axis line style = thin}
}
\tikzstyle{SB}    = [color=black, solid]
\tikzstyle{LRS} = [color=red, dash pattern=on 2pt off 4pt on 6pt off 4pt, mark=x, mark options={solid}]
\tikzstyle{FLRS}   = [color=blue, dashed, mark=diamond, mark options={solid}]
\newcolumntype{M}[1]{>{\centering\arraybackslash}m{#1}}
\newcommand{\Fqm}{\ensuremath{\mathbb F_{q^m}}}
\newcommand{\Fqmh}{\ensuremath{\mathbb F_{q^{mh}}}}
\newcommand{\Fq}{\ensuremath{\mathbb F_{q}}}
\newcommand{\F}{\ensuremath{\mathbb F}}
\newcommand{\NN}{\ensuremath{\mathbb{N}}}
\newcommand{\set}[1]{\ensuremath{\mathcal{#1}}}
\newcommand{\Polyring}{\ensuremath{\Fqm[x]}}
\newcommand{\aut}{\ensuremath{\sigma}}
\newcommand{\SkewPolyringZeroDer}{\ensuremath{\Fqm[x,\aut]}}
\newcommand{\MultSkewPolyringZeroDer}{\ensuremath{\Fqm[x,y_1,\dots,y_\intOrder,\aut]}}
\newcommand{\opev}[3]{\ensuremath{{#1}(#2)_{#3}}}
\newcommand{\op}[2]{\ensuremath{\mathcal{D}_{#1}(#2)}}
\newcommand{\opexp}[3]{\ensuremath{\mathcal{D}_{#1}^{#3}(#2)}}
\newcommand{\conj}[2]{\ensuremath{{#1}^{#2}}}
\newcommand{\OCompl}[1]{\ensuremath{\mathcal{O}({#1})}}
\newcommand{\defeq}{:=}
\DeclareMathOperator{\wt}{wt}
\DeclareMathOperator{\rk}{rk}
\DeclareMathOperator{\unif}{unif}
\newcommand{\mat}[1]{\ensuremath{\bm{#1}}}
\newcommand{\opVandermonde}[3]{\ensuremath{\mathfrak{m}_{#1}(#2)_{#3}}}
\newcommand{\opMoore}[3]{\ensuremath{\mathfrak{M}_{#1}(#2)_{#3}}}
\renewcommand{\a}{\mathbf a}
\renewcommand{\b}{\mathbf b}
\renewcommand{\c}{\mathbf c}
\newcommand{\f}{\mathbf f}
\newcommand{\p}{\mathbf p}
\newcommand{\q}{\mathbf q}
\newcommand{\x}{\mathbf x}
\newcommand{\B}{\mathbf B}
\newcommand{\C}{\mathbf C}
\newcommand{\E}{\mathbf E}
\renewcommand{\L}{\mathbf L}
\renewcommand{\P}{\mathbf P}
\newcommand{\R}{\mathbf R}
\renewcommand{\S}{\mathbf S}
\newcommand{\0}{\mathbf 0}
\newcommand{\veczeta}{\ensuremath{\boldsymbol{\zeta}}}
\newcommand{\mycode}[1]{\ensuremath{\mathcal{#1}}}
\newcommand{\foldedLinRS}[1]{\ensuremath{\mathrm{F}\mathrm{LRS}[#1]}}
\newcommand{\SumRankWeight}{\ensuremath{\wt_{\Sigma R}}}
\newcommand{\SumRankDist}{d_{\ensuremath{\Sigma}R}}
\newcommand{\shot}[2]{\ensuremath{{#1}^{(#2)}}}
\newcommand{\pe}{\ensuremath{\alpha}}
\newcommand{\degConstraint}{\ensuremath{D}}
\newcommand{\intOrder}{\ensuremath{s}}
\newcommand{\foldPar}{\ensuremath{h}}
\newcommand{\intDim}{\ensuremath{s}}
\newcommand{\shots}{\ensuremath{\ell}}
\newcommand{\lenFLRS}{\ensuremath{N}}
\newcommand{\lenFLRSshot}[1]{\ensuremath{\Lambda}}
\newcommand{\len}{\ensuremath{n}}
\newcommand{\lenShot}[1]{\ensuremath{\lambda}}
\begin{document}

\title{Efficient Decoding of Folded Linearized Reed--Solomon Codes in the Sum-Rank Metric}

\author{Felicitas Hörmann\,\orcidlink{0000-0003-2217-9753} \and Hannes Bartz\,\orcidlink{0000-0001-7767-1513}}
\institute{
Institute of Communications and Navigation \\ German Aerospace Center (DLR), Germany\\
\email{$\{$felicitas.hoermann, hannes.bartz$\}$@dlr.de}}

\maketitle

\begin{abstract}
 Recently, codes in the sum-rank metric attracted attention due to several applications in e.g. multishot network coding, distributed storage and quantum-resistant cryptography.
 The sum-rank analogs of Reed--Solomon and Gabidulin codes are linearized Reed--Solomon codes.
 We show how to construct $\foldPar$-folded linearized Reed--Solomon (FLRS) codes and derive an interpolation-based decoding scheme that is capable of correcting sum-rank errors beyond the unique decoding radius.
 The presented decoder can be used for either list or probabilistic unique decoding and requires at most $\OCompl{\intDim \len^2}$ operations in $\Fqm$, where $\intDim\leq\foldPar$ is an interpolation parameter and $n$ denotes the length of the unfolded code.
 We derive a heuristic upper bound on the failure probability of the probabilistic unique decoder and verify the results via Monte Carlo simulations.
\end{abstract}

\section{Introduction} \label{sec:introduction}

The sum-rank metric was first encountered in the context of space-time coding~\cite[Sec.~III]{lu2005unified} and can be seen as a hybrid between the Hamming and the rank metric.
Codes in the sum-rank metric are of interest for error control in multishot network coding~\cite{nobrega2010multishot}, for the construction of locally repairable codes~\cite{martinez2019universal} and in the context of quantum-resistant cryptography~\cite{puchinger2020generic}.
The family of \ac{LRS} codes was first described by Mart{\'\i}nez-Pe{\~n}as~\cite{martinez2018skew}, independently studied in~\cite{caruso2019residues}, and fulfills the Singleton-like bound in the sum-rank metric with equality.
A Welch--Berlekamp-like decoder that can correct errors of sum-rank weight $t\leq\lfloor\frac{n-k}{2}\rfloor$, where $n$ is the length and $k$ the dimension of the code, was proposed in~\cite{martinez2019reliable}.
In~\cite{bartz2021fast}, a speed-up was achieved by using approximant bases.
Recently, it was shown in~\cite{bartz2021decoding} and~\cite{bartz2022fast} that \emph{interleaved} \ac{LRS} codes allow to correct sum-rank errors beyond the unique decoding radius.

\emph{Our Contribution:}
We introduce a \emph{folded} variant of LRS codes and provide an interpolation-based algorithm allowing to decode errors of sum-rank weight up to $t < \frac{\intDim}{\intDim+1} \left( \frac{\lenFLRS(\foldPar-\intDim+1)-k+1}{\foldPar-\intDim+1} \right)$, where $\foldPar$ is the blockwise folding parameter, $\lenFLRS$ the code length, $k$ the dimension of the code, and $\intDim\leq\foldPar$ a decoding parameter.
Therefore, our approach allows to correct sum-rank errors beyond the unique decoding radius in quadratic complexity.
Even though the worst-case list size is exponential, we show that a unique solution is obtained with high probability which allows to use the scheme as a probabilistic unique decoder.
We derive a heuristic upper bound on the decoding failure probability and verify the findings by Monte Carlo simulations.
It is worth noting that the proposed decoding scheme generalizes known decoders for folded Reed--Solomon and folded Gabidulin codes in the Hamming and the rank metric, respectively.

\section{Preliminaries} \label{sec:preliminaries}

Let $q$ be a prime power and $\Fq$ a finite field of order $q$.
For any $m \in \NN^{\ast}$, let $\Fqm \supseteq \Fq$ denote an extension field with $q^m$ elements.
We call $\pe \in \Fqm$ \emph{primitive} in $\Fqm$ if it generates the multiplicative group $\Fqm^{\ast} \defeq \Fqm \setminus \{0\}$.

In this paper, we mostly consider matrices whose $\lenFLRS$ columns are divided into $\shots \in \NN^{\ast}$ blocks of the same length $\lenFLRSshot{i} \defeq \frac{\lenFLRS}{\shots} \in \NN^{\ast}$.
Fix $\foldPar \in \NN^{\ast}$ and let $\mat{X}=(\mat{X}^{(1)} \mid \dots \mid \mat{X}^{(\shots)}) \in \Fqm^{\foldPar \times \lenFLRS}$ be a matrix with $\mat{X}^{(i)} \in \Fqm^{\foldPar \times \lenFLRSshot{i}}$ for all $i \in \{1, \ldots, \shots\}$.
Then, the \emph{sum-rank weight} of $\mat{X}$ is defined as
$\SumRankWeight(\mat{X}) \defeq \sum_{i=1}^{\shots} \rk_q\left(\mat{X}^{(i)}\right)$,
where $\rk_q\left(\mat{X}^{(i)}\right)$ is the maximum number of $\Fq$-linearly independent columns of $\mat{X}^{(i)}$.
The \emph{sum-rank distance} of two comparable elements is computed as the sum-rank weight of their difference and forms indeed a metric.
We are concerned with \emph{sum-rank codes} $\mycode{C}$ being subsets of an $\Fqm$-vector space equipped with the sum-rank metric.
If $\mycode{C}$ is an $\Fqm$-linear subspace, the code is called \emph{linear} and its \emph{minimum (sum-rank) distance} is $\SumRankDist(\mycode{C}) = \min \{\SumRankWeight(\c) : \c \in \mycode{C}, \c \neq 0\}$.

Let $\aut$ be an $\Fq$-linear automorphism on $\Fqm$, that is $\aut(a) = a^{q^s}$ for all $a \in \Fqm$ and a particular $s \in \{0, \dots, m-1\}$.
Two elements $a, b \in \Fqm$ are called \emph{conjugate} if there is a $c \in \Fqm^{\ast}$ such that $\conj{a}{c} \defeq \aut(c)ac^{-1} = b$.
The set $\set{C}(a) \defeq \left\{ a^c : c \in \Fqm^{\ast} \right\}$ is called \emph{conjugacy class} of $a$ and $\Fqm$ is partitioned into $q^{\gcd(s,m)}$ of these classes.
If $s = 1$ and $\pe \in \Fqm^{\ast}$ is a primitive element, the set $\{1,\pe,\dots,\pe^{q-2}\}$ contains representatives of all $q-1$ distinct nontrivial conjugacy classes.

The \emph{skew polynomial ring} $\SkewPolyringZeroDer$ (with zero derivation) is defined as the set of polynomials $\sum_i f_i x^i$ with finitely many nonzero coefficients $f_i \in \Fqm$.
It forms a non-commutative ring with respect to ordinary polynomial addition and multiplication determined by the rule $x f_i = \aut(f_i) x$ for all $f_i \in \Fqm$.
We define the \emph{degree} of a skew polynomial $f(x) = \sum_i f_i x^i$ as $\deg(f) \defeq \max \{i : f_i \neq 0\}$ and write $\SkewPolyringZeroDer_{<k}\defeq\{f\in\SkewPolyringZeroDer:\deg(f)<k\}$ for $k \geq 0$.
We further introduce the operator $\op{a}{b} \defeq \aut(b)a$ for any $a, b \in \Fqm$ and its powers $\opexp{a}{b}{i} \defeq \aut^i(b)\aut^{i-1}(a)\cdots\aut(a)a$ for $i \in \NN^{\ast}$.
For a vector $\x = (\x^{(1)} \mid \dots \mid \x^{(\shots)})\in\Fqm^K$ with $\shots$ blocks of length $\kappa \defeq K/\shots \in \NN^{\ast}$, a vector $\a = (a_1, \ldots, a_\shots) \in \Fqm^{\shots}$, and a parameter $d \in \NN^{\ast}$ the \emph{generalized Moore matrix} is defined as
\begin{align}\label{eq:def_gen_moore_mat}
  \opMoore{d}{\x}{\a} &\defeq
  \begin{pmatrix}
    \opVandermonde{d}{\x^{(1)}}{a_1} & \opVandermonde{d}{\x^{(2)}}{a_2} & \cdots & \opVandermonde{d}{\x^{(\shots)}}{a_\shots}
  \end{pmatrix} \in \Fqm^{d \times K}
  ,\\
  \text{where} \quad
  \opVandermonde{d}{\x^{(i)}}{a_i} &\defeq
  \begin{pmatrix}
   x^{(i)}_1 & x^{(i)}_2 & \cdots & x^{(i)}_{\kappa}
   \\
   \op{a_i}{x^{(i)}_1} & \op{a_i}{x^{(i)}_2} & \cdots & \op{a_i}{x^{(i)}_{\kappa}}
   \\[-4pt]
   \vdots & \vdots & \ddots & \vdots
   \\
   \opexp{a_i}{x^{(i)}_1}{d-1} & \opexp{a_i}{x^{(i)}_2}{d-1} & \cdots & \opexp{a_i}{x^{(i)}_{\kappa}}{d-1}
  \end{pmatrix}
  \quad \text{for } 1 \leq i \leq \shots
  .
  \notag
\end{align}
If $\a$ contains representatives of pairwise distinct nontrivial conjugacy classes of $\Fqm$ and $\rk_{q}(\x^{(i)})=\kappa$ for all $1 \leq i \leq \shots$, we have by~\cite[Thm.~2]{martinez2018skew} and~\cite[Thm~4.5]{lam1988vandermonde} that $\rk_{q^m}(\opMoore{d}{\x}{\a}) = \min(d, \shots\kappa)$.

The \emph{generalized operator evaluation} of a skew polynomial $f \in \SkewPolyringZeroDer$ at $b \in \Fqm$ with respect to $a \in \Fqm$ is defined as $\opev{f}{b}{a} = \sum_{i} f_i \opexp{a}{b}{i}$.
Let $a_1,\dots,a_\shots$ be representatives of distinct nontrivial conjugacy classes of $\Fqm$ and consider $n_i$ $\Fq$-linearly independent elements $\zeta_1^{(i)},\dots,\zeta_{n_i}^{(i)} \in \Fqm$ for each $i = 1, \dots, \shots$.
Then any nonzero $f\in\SkewPolyringZeroDer$ satisfying $\opev{f}{\zeta_j^{(i)}}{a_i}=0$ for all $1 \leq j \leq n_i$ and all $1 \leq i \leq \shots$ has degree at least $\sum_{i=1}^{\shots} n_i$ (see e.g.~\cite{caruso2019residues}).

\section{Interpolation-Based Decoding of Folded Linearized Reed--Solomon Codes} \label{sec:decodingFLRS}

Motivated by the results for folded Reed--Solomon codes~\cite{Guruswami2008Explicit,Vadhan2011} and folded Gabi\-dulin codes~\cite{Mahdavifar2012Listdecoding,BartzSidorenko_FoldedGabidulin2015_DCC} we define \ac{FLRS} codes as follows.
We start from a linearized Reed--Solomon code of length $\len \in \NN^{\ast}$ with $\shots \in \NN^{\ast}$ same-sized blocks of length $\lenShot{i} \defeq \frac{\len}{\shots} \leq m$ over $\Fqm$,
and transform each block into an $(\foldPar \times \frac{\lenShot{i}}{\foldPar})$-matrix for a \emph{folding parameter} $\foldPar \in \NN^{\ast}$ dividing $\lenShot{i}$.
\begin{definition}[Folded Linearized Reed--Solomon Codes]\label{def:FLRScodes}
    Consider a primitive element $\pe$ of $\Fqm$ and let $\a=(a_1,\dots,a_\shots) \in \Fqm^{\shots}$ contain representatives of pairwise distinct nontrivial conjugacy classes of $\Fqm$.
    An $\foldPar$-folded linearized Reed--Solomon code of length $\lenFLRS \defeq \frac{\len}{\foldPar}$ and dimension $k \leq \len$ is defined as
    \begin{equation}\label{eq:defFLRScode}
        \foldedLinRS{\a,\pe,\shots,\foldPar;\lenFLRS,k} \defeq
        \left\{
            \left( \C^{(1)}(f) \mid \dots \mid \C^{(\shots)}(f) \right) : f \in \SkewPolyringZeroDer_{<k}
        \right\}
    \end{equation}
    \begin{equation}\label{eq:defFLRScodeblock}
        \text{with }
        \C^{(i)}(f)\defeq
        \begin{pmatrix}
            \opev{f}{1}{a_i} & \opev{f}{\pe^{\foldPar}}{a_i} & \cdots & \opev{f}{\pe^{\lenShot{i}-\foldPar}}{a_i}
            \\
            \opev{f}{\pe}{a_i} & \opev{f}{\pe^{\foldPar+1}}{a_i} & \cdots & \opev{f}{\pe^{\lenShot{i}-\foldPar+1}}{a_i}
            \\
            \vdots & \vdots & \ddots & \vdots
            \\
            \opev{f}{\pe^{\foldPar-1}}{a_i} & \opev{f}{\pe^{2\foldPar-1}}{a_i} & \cdots & \opev{f}{\pe^{\lenShot{i}-1}}{a_i}
        \end{pmatrix}
        \in \Fqm^{\foldPar \times \lenFLRSshot{i}}
    \end{equation}
    for all $i \in \{1, \ldots, \shots\}$.
    We denote the length of a folded block by $\lenFLRSshot{i} \defeq \frac{\lenShot{i}}{\foldPar} = \frac{\len}{\foldPar\shots}$.
\end{definition}
Note that this definition can easily be generalized to different block lengths and more general $\Fq$-linearly independent code locators.
\ac{FLRS} codes are naturally embedded in $\Fqmh$ but linearity is only guaranteed over the subfield $\Fqm$.
\begin{lemma}[Minimum Distance]
    The code $\foldedLinRS{\a,\pe,\shots,\foldPar;\lenFLRS,k}$ has minimum distance $\SumRankDist(\foldedLinRS{\a,\pe,\shots,\foldPar;\lenFLRS,k}) = \lenFLRS - \left\lceil \frac{k}{\foldPar} \right\rceil + 1$.
    It is a \ac{MSRD} code if and only if $\foldPar$ divides $k$.
\end{lemma}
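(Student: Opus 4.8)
The plan is to transfer the problem to the underlying \ac{LRS} code and to keep careful track of how the folding affects the block ranks. For $f\in\SkewPolyringZeroDer_{<k}$ write $\c^{(i)}(f)\defeq\bigl(\opev{f}{1}{a_i},\opev{f}{\pe}{a_i},\dots,\opev{f}{\pe^{\lenShot{i}-1}}{a_i}\bigr)\in\Fqm^{\lenShot{i}}$ for the $i$-th unfolded block and $\C(f)\defeq\bigl(\C^{(1)}(f)\mid\dots\mid\C^{(\shots)}(f)\bigr)$, so that the columns of $\C^{(i)}(f)$ in~\eqref{eq:defFLRScodeblock} are exactly the consecutive length-$\foldPar$ chunks of $\c^{(i)}(f)$. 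First I would note that $f\mapsto\C(f)$ is $\Fqm$-linear and injective: a nonzero $f$ in its kernel would satisfy $\opev{f}{\pe^j}{a_i}=0$ for $0\le j<\lenShot{i}$ and all $i$; since $\pe$ is primitive its minimal polynomial over $\Fq$ has degree $m$, so $1,\pe,\dots,\pe^{\lenShot{i}-1}$ are $\Fq$-linearly independent inside every block (here $\lenShot{i}\le m$), and the degree bound from Section~\ref{sec:preliminaries} forces $\deg f\ge\len\ge k$, a contradiction. Hence the code has $\Fqm$-dimension $k$ and, being $\Fqm$-linear, its minimum distance equals $\min\{\SumRankWeight(\C(f)):0\neq f\in\SkewPolyringZeroDer_{<k}\}$.

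For the lower bound fix $0\neq f$ with $\deg f<k$ and put $r_i\defeq\rk_q(\C^{(i)}(f))$. Projecting the $\Fq$-span of the columns of $\C^{(i)}(f)$ in $\Fqm^{\foldPar}$ onto each of its $\foldPar$ coordinates shows that the $\Fq$-span of the entries of $\c^{(i)}(f)$ has $\Fq$-dimension at most $\foldPar r_i$, i.e.\ $\rk_q(\c^{(i)}(f))\le\foldPar r_i$ and thus $r_i\ge\lceil\rk_q(\c^{(i)}(f))/\foldPar\rceil$. On the other hand, writing $\rk_q(\c^{(i)}(f))=\lenShot{i}-\kappa_i$ where $\kappa_i$ is the $\Fq$-dimension of the kernel of the $\Fq$-linear map $\zeta\mapsto\opev{f}{\zeta}{a_i}$ on $\langle1,\pe,\dots,\pe^{\lenShot{i}-1}\rangle_{\Fq}$, and collecting $\Fq$-bases of these kernels over all $i$, the same degree bound gives $\sum_i\kappa_i\le\deg f\le k-1$, so $\sum_i\rk_q(\c^{(i)}(f))\ge\len-k+1$. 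Using $\lceil a+b\rceil\le\lceil a\rceil+\lceil b\rceil$ and $\foldPar\mid\len$ this yields
\[
\SumRankWeight(\C(f))=\sum_{i=1}^{\shots}r_i\ \ge\ \Bigl\lceil\frac{1}{\foldPar}\sum_{i=1}^{\shots}\rk_q\bigl(\c^{(i)}(f)\bigr)\Bigr\rceil\ \ge\ \Bigl\lceil\frac{\len-k+1}{\foldPar}\Bigr\rceil\ =\ \lenFLRS-\Bigl\lceil\frac{k}{\foldPar}\Bigr\rceil+1 .
\]

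For the matching upper bound I would fix $N\defeq\lceil k/\foldPar\rceil-1=\lfloor(k-1)/\foldPar\rfloor$ folded column positions (possible, since $\foldPar N\le k-1<\len$ forces $N<\lenFLRS$) and impose on $f$ the $\foldPar N$ $\Fqm$-linear conditions stating that $f$ annihilates, in the generalized operator sense, the $\foldPar N$ powers of $\pe$ appearing in those columns. Since $\foldPar N<k=\dim_{\Fqm}\SkewPolyringZeroDer_{<k}$, a nonzero such $f$ exists, and then $\C(f)$ has at least $N$ zero columns, hence $\SumRankWeight(\C(f))\le\lenFLRS-N=\lenFLRS-\lceil k/\foldPar\rceil+1$. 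Together with injectivity and the lower bound this gives $\SumRankDist=\lenFLRS-\lceil k/\foldPar\rceil+1$. For the \ac{MSRD} claim I would invoke the Singleton-like bound in the sum-rank metric: since \ac{FLRS} codewords lie in $\Fqmh^{\lenFLRS}$ with block lengths $\lenFLRSshot{i}\le\lenShot{i}\le m\le m\foldPar$, any such code of cardinality $(q^{m\foldPar})^{\varrho}$ has sum-rank distance at most $\lenFLRS-\varrho+1$; here the cardinality is $q^{mk}=(q^{m\foldPar})^{k/\foldPar}$, so the bound reads $\SumRankDist\le\lenFLRS-k/\foldPar+1$. As $\lceil k/\foldPar\rceil\ge k/\foldPar$ with equality exactly when $\foldPar\mid k$, the code attains this bound -- i.e.\ is \ac{MSRD} -- if and only if $\foldPar\mid k$.

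The step I expect to be least routine is the folding estimate $\rk_q(\c^{(i)}(f))\le\foldPar\,\rk_q(\C^{(i)}(f))$ together with checking that the ceiling bookkeeping in the lower bound closes exactly at $\lenFLRS-\lceil k/\foldPar\rceil+1$ rather than at a weaker value; the remaining ingredients -- the skew-polynomial degree bound, the dimension count, and the sum-rank Singleton bound -- are standard.
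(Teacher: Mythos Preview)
Your proof is correct and close in spirit to the paper's, but it packages the two bounds differently. For the lower bound, the paper works directly with the folded blocks: it passes to the $\Fq$-column-reduced echelon form of each $\C^{(i)}(f)$, observes that the $z_i$ zero columns contribute $\foldPar z_i$ $\Fq$-linearly independent roots of $f$ with respect to $a_i$, and concludes $\foldPar\sum_i z_i\le k-1$. Your argument instead factors through the unfolded \ac{LRS} codeword via the folding estimate $\rk_q(\c^{(i)}(f))\le\foldPar\,\rk_q(\C^{(i)}(f))$ together with the \ac{LRS} weight bound $\sum_i\rk_q(\c^{(i)}(f))\ge\len-k+1$; this is equivalent but makes the reduction to the unfolded code explicit. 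For the upper bound the difference is more substantial: the paper simply applies the Singleton-like bound $\SumRankDist\le\lenFLRS-k/\foldPar+1$ and lets integrality of the distance close the gap, whereas you exhibit a concrete codeword of weight $\lenFLRS-\lceil k/\foldPar\rceil+1$ by annihilating $\foldPar(\lceil k/\foldPar\rceil-1)<k$ locators. Your route is slightly longer but yields an explicit minimum-weight codeword; the paper's is shorter but already relies on the external Singleton bound for the distance formula itself, not only for the \ac{MSRD} characterization.
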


\begin{proof}
    For every nonzero codeword $\C \in \foldedLinRS{\a,\pe,\shots,\foldPar;\lenFLRS,k}$ with message polynomial $f \in \SkewPolyringZeroDer_{<k}$, there are $z, z_1, \dots, z_{\shots} \geq 0$ with $z = \sum_{i=1}^{\shots} z_i$ such that $\SumRankWeight(\C) = \lenFLRS - z$ and $\rk_q(\C^{(i)}) = \lenFLRSshot{i} - z_i$ for $i = 1, \dots, \shots$.
    The column-reduced echelon form of $\C^{(i)}$, whose entries can still be expressed as evaluations of $f$ at evaluation parameter $a_i$, has exactly $z_i$ zero columns.
    In the blockwise reduced matrix are hence $z$ zero columns in total.
    Since the sum of the number of $\Fq$-linearly independent roots of $f$ per evaluation parameter is bounded by its degree, we get $zh \leq k-1$ and equivalently $z \leq \left\lfloor\frac{k-1}{h} \right\rfloor = \left\lceil \frac{k}{\foldPar} \right\rceil - 1$.
    It follows
    $\SumRankDist(\foldedLinRS{\a,\pe,\shots,\foldPar;\lenFLRS,k}) \geq \SumRankWeight(\C) \geq N - \left\lceil \frac{k}{h} \right\rceil + 1$.
    On the other hand, the Singleton-like bound~\cite[Prop. 34]{martinez2018skew} yields
    $\SumRankDist(\foldedLinRS{\a,\pe,\shots,\foldPar;\lenFLRS,k})  \leq N - \frac{k}{h} + 1$
    and the claim follows.
    \qed
\end{proof}

As channel model we consider a sum-rank channel with fixed error weight $t \in \NN$ where the input
$\C\in\foldedLinRS{\a,\pe,\shots,\foldPar;\lenFLRS,k}$ is related to the output $\R$ by $\R = \C + \E \in \Fqm^{\foldPar\times\lenFLRS}$.
The error matrix $\E \in \Fqm^{\foldPar\times\lenFLRS}$ is chosen uniformly at random from the set of all matrices in $\Fqm^{\foldPar\times\lenFLRS}$ having sum-rank weight $t$.
In the following we write
\begin{equation}
    \setlength{\abovedisplayskip}{3pt}
    \setlength{\belowdisplayskip}{\abovedisplayskip}
    \R = \left( \shot{\R}{1} \mid \dots \mid \shot{\R}{\shots} \right)
    \quad \text{and} \quad
    \shot{\R}{i} =
    \begin{pmatrix}
        r_1^{(i)} & r_{\foldPar+1}^{(i)} & \cdots & r_{\lenShot{i}-\foldPar+1}^{(i)}
        \\[-3pt]
        \vdots & \vdots & \ddots & \vdots
        \\
        r_{\foldPar}^{(i)} & r_{2\foldPar}^{(i)} & \cdots & r_{\lenShot{i}}^{(i)}
    \end{pmatrix}
    \in \Fqm^{\foldPar\times\lenFLRSshot{i}}
\end{equation}
for $i \in \{1, \ldots, \ell\}$ and proceed to our interpolation-based decoder.

\subsection{Interpolation Step}

We perform $(s+1)$-variate skew polynomial interpolation with respect to a chosen interpolation parameter $\intDim \in \NN^{\ast}$ with $\intDim \leq \foldPar$.
The set $\set{P}$ of interpolation points is defined by means of a blockwise sliding window approach, whose eligible starting positions are collected in the index set $\set{W}$.
Namely, we consider
\begin{gather}\label{eq:intPointDef}
    \begin{aligned}
         \set{W} &\defeq \left\{ (j-1) \foldPar + l : j \in \{1, \ldots, \lenFLRSshot{i}\}, l \in \{1, \ldots, \foldPar - \intDim + 1\} \right\}
         \\
        \text{and} \quad
        \set{P} &\defeq \left\{ \left( \pe^{w-1}, r_{w}^{(i)}, r_{w+1}^{(i)}, \dots, r_{w+\intDim-1}^{(i)} \right):
        w \in \set{W}, i \in \{1, \ldots, \shots\} \right\}.
    \end{aligned}
\end{gather}

We wish to find a multivariate skew interpolation polynomial of the form
\begin{equation}\label{eq:mult_var_skew_poly_FLRS}
    Q(x,y_1,\dots,y_\intDim)=Q_0(x)+Q_1(x)y_1+\dots+Q_\intDim(x)y_\intDim
    ,
\end{equation}
where $Q_r(x)\in\SkewPolyringZeroDer$ for all $r \in \{0, \ldots, \intDim\}$, that satisfies certain interpolation constraints.
The \emph{generalized operator evaluation} of such a polynomial $Q \in \MultSkewPolyringZeroDer$ at a given interpolation point $(w, i)$ is defined as
\begin{equation}
    \mathscr{E}_{Q}(w, i) \defeq \opev{Q_0}{\pe^{w-1}}{a_i} + \opev{Q_1}{r_{w}^{(i)}}{a_i} + \dots + \opev{Q_\intDim}{r_{w+\intDim-1}^{(i)}}{a_i} \label{eq:defFuncGenOpFLRS}
\end{equation}
where $w \in \set{W}$ and $1 \leq i \leq \shots$ as in~\eqref{eq:intPointDef}.

\begin{problem}[Interpolation Problem] \label{prob:intProblemFLRS}
    For a chosen parameter $\degConstraint \in \NN^{\ast}$ find a nonzero $(s+1)$-variate skew polynomial $Q$ of the form~\eqref{eq:mult_var_skew_poly_FLRS}
    satisfying
    \begin{enumerate}
        \item
            $\mathscr{E}_{Q}(w, i) = 0$ for all $w \in \set{W}$ and $i \in \{1, \ldots, \shots\}$ as well as
        \item
            $\deg(Q_0) < D$ and $\deg(Q_r) < D - k + 1$ for all $r \in \{1, \ldots, \intDim\}$.
    \end{enumerate}
\end{problem}

The second condition of the interpolation problem allows us to write
\begin{equation}
    \label{eq:Qcoefficients}
    \textstyle
    Q_0(x) = \sum_{j=0}^{\degConstraint-1} q_{0, j} x^j
    \quad \text{and} \quad
    Q_r(x) = \sum_{j=0}^{\degConstraint-k} q_{r, j} x^j
    \quad \text{for} \quad
    r \in \{1, \ldots, \intDim\}
\end{equation}
with all coefficients from $\Fqm$.
For each block index $1 \leq i \leq \shots$, we collect all $\lenFLRSshot{i}(\foldPar-\intDim+1)$ interpolation points originating from $\R^{(i)}$ as rows in a matrix $\P_i\in\Fqm^{\lenFLRSshot{i}(\foldPar-\intDim+1)\times (\intDim+1)}$ and denote its columns by $\p_{i, 0}, \ldots, \p_{i, \intDim}$.
Define further $\p_r = (\p_{1, r}^{\top} \ | \ \cdots \ | \ \p_{l, r}^{\top})$ for $0 \leq r \leq \intDim$.
Then, Problem~\ref{prob:intProblemFLRS} can be written as
\begin{gather}\label{eq:intSystem}
    \S\q_I^{\top} = \0 \\
    \text{with} \quad
    \S = \left(
    \begin{array}{c|c|c|c}
        \left(\opMoore{D}{\p_0}{\a}\right)^{\top} & \left(\opMoore{D-k+1}{\p_1}{\a}\right)^{\top} & \cdots & \left(\opMoore{D-k+1}{\p_s}{\a}\right)^{\top}
    \end{array}
    \right)
    \notag \\
    \text{and} \quad
    \q_I = \left(q_{0, 0} \cdots q_{0, D-1} \ | \ q_{1, 0} \cdots q_{1, D-k} \ | \ \cdots \ | \ q_{\intDim, 0} \cdots q_{\intDim, D-k}\right)
    \notag
    .
\end{gather}
The interpolation system~\eqref{eq:intSystem} can be solved using skew Kötter interpolation from~\cite{liu2014kotter} (similar as in~\cite[Sec.~V]{bartz2014efficient}) requiring at most $\OCompl{\intDim \len^2}$ operations in $\Fqm$.

\begin{lemma}[Existence]\label{lem:degContstraintForExistence}
    A nonzero solution to Problem~\ref{prob:intProblemFLRS} exists if
    \begin{equation}
        \degConstraint=\left\lceil\frac{\lenFLRS(\foldPar-\intDim+1)+\intDim(k-1)+1}{\intDim+1}\right\rceil.
    \end{equation}
\end{lemma}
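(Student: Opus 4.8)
The strategy is the classical dimension-counting argument for interpolation-based decoders: the homogeneous linear system \eqref{eq:intSystem} has a nonzero solution as soon as the number of free variables (coefficients of $Q$) strictly exceeds the number of linear constraints. I would first count these two quantities precisely, then solve the resulting inequality for $\degConstraint$.

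\textbf{Counting the unknowns.} By \eqref{eq:Qcoefficients}, the polynomial $Q_0$ contributes $\degConstraint$ coefficients and each of the $\intDim$ polynomials $Q_1,\dots,Q_\intDim$ contributes $\degConstraint-k+1$ coefficients, so the total number of unknowns is
\begin{equation}
    \degConstraint + \intDim(\degConstraint-k+1) = (\intDim+1)\degConstraint - \intDim(k-1).
\end{equation}
(Here one tacitly assumes $\degConstraint\geq k$ so that the degree bound $\degConstraint-k+1$ on the $Q_r$, $r\geq 1$, is a positive integer; this will be checked against the final expression, or the degenerate case $\degConstraint<k$ handled by setting those coefficient blocks to zero, which only makes existence easier.)

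\textbf{Counting the constraints.} Each interpolation point $(w,i)$ imposes the single $\Fqm$-linear equation $\mathscr{E}_Q(w,i)=0$. The index set $\set{W}$ has $\lenFLRSshot{i}(\foldPar-\intDim+1)$ elements for block $i$, and since all blocks share the common folded length $\lenFLRSshot{i}=\lenFLRS/\shots$, summing over $i\in\{1,\dots,\shots\}$ gives $\shots\cdot\frac{\lenFLRS}{\shots}(\foldPar-\intDim+1) = \lenFLRS(\foldPar-\intDim+1)$ constraints in total (this is exactly the row count of $\S$).

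\textbf{Putting it together.} A nonzero solution is guaranteed whenever
\begin{equation}
    (\intDim+1)\degConstraint - \intDim(k-1) > \lenFLRS(\foldPar-\intDim+1),
\end{equation}
i.e.\ $\degConstraint > \frac{\lenFLRS(\foldPar-\intDim+1)+\intDim(k-1)}{\intDim+1}$, which holds for the smallest integer $\degConstraint \geq \frac{\lenFLRS(\foldPar-\intDim+1)+\intDim(k-1)+1}{\intDim+1}$, matching the claimed value. I would finish by remarking that this choice indeed satisfies $\degConstraint\geq k$ under the relevant parameter regime (or that the claim is vacuous otherwise), so the coefficient description \eqref{eq:Qcoefficients} is consistent.

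\textbf{Main obstacle.} The argument is essentially bookkeeping; the only place requiring care is the constraint count — one must be sure that the sliding-window constraints are being counted without overlap and that the uniform block length is used correctly — together with the rounding in passing from the strict inequality to the ceiling expression. I would double-check the off-by-one in $\left\lfloor\tfrac{a}{b}\right\rfloor+1 = \left\lceil\tfrac{a+1}{b}\right\rceil$ so that the stated $\degConstraint$ is exactly the threshold and not one larger than necessary.
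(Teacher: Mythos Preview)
Your proposal is correct and follows exactly the same approach as the paper's own proof: compare the number of unknowns $(\intDim+1)\degConstraint-\intDim(k-1)$ against the number of homogeneous constraints $\lenFLRS(\foldPar-\intDim+1)$ and solve the resulting strict inequality for $\degConstraint$. The paper's proof is a one-line version of your argument, omitting the explicit counts and the ceiling bookkeeping, but the content is identical.
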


\begin{proof}
    A nontrivial solution of~\eqref{eq:intSystem} exists if less equations than unknowns are involved.
    That is, if $\lenFLRS(\foldPar-\intDim+1) < \degConstraint(\intDim+1)-\intDim(k-1)$.
    \qed
\end{proof}

\begin{lemma}[Roots of Polynomial]\label{lem:decConditionFLRS}
    Define the univariate skew polynomial
    \begin{align}
        P(x) &\defeq Q_0(x)+Q_1(x)f(x)+Q_2(x)f(x)\pe+\dots+Q_\intDim(x)f(x)\pe^{\intDim - 1} \\
        &= Q(x, f(x), f(x)\pe, \dots, f(x)\pe^{\intDim-1}) \in \SkewPolyringZeroDer
        \nonumber
    \end{align}
    and write $t_i \defeq \rk_q(\mat{E}^{(i)})$ for $1 \leq i \leq \shots$.
    Then there exist $\Fq$-linearly independent elements $\zeta_1^{(i)}, \ldots, \zeta_{(\lenFLRSshot{i}-t_i)(\foldPar-\intDim+1)}^{(i)} \in \Fqm$ for each $i \in \{1, \ldots, \shots\}$ such that $\opev{P}{\zeta_j^{(i)}}{a_i} = 0$ for all $1 \leq i \leq \shots$ and all $1 \leq j \leq (\lenFLRSshot{i}-t_i)(\foldPar-\intDim+1)$.
\end{lemma}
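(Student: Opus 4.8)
The plan is to identify, for each block $i$, a set of code locators on which the error vanishes, and to show that $P$ is forced to vanish under generalized operator evaluation at these locators. First I would recall from the minimum-distance argument that, after bringing $\mat{E}^{(i)}$ into column-reduced echelon form (while keeping the entries of the corresponding columns of $\C^{(i)}$ expressible as operator evaluations of $f$ at $a_i$), exactly $t_i = \rk_q(\mat{E}^{(i)})$ columns carry a nonzero error and $\lenFLRSshot{i} - t_i$ columns are error-free. Equivalently, there is an $\Fq$-linearly independent set of $\lenFLRSshot{i} - t_i$ ``folded'' locators $\pe^{e}$ (collected across folded columns $e \in \{0, \foldPar, 2\foldPar, \ldots\}$ in the appropriate combination) at which the received column of $\R^{(i)}$ equals the true codeword column, so that $r_{w+r-1}^{(i)} = \opev{f}{\pe^{w-1+r-1}}{a_i}$ for those positions. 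The key algebraic identity is that operator evaluation satisfies $\opev{f}{\pe^{w+r-1}}{a_i} = \opev{f}{\aut^{?}(\cdot)}{a_i}$ shifted appropriately; more precisely, for the linearized/skew setting one has a ``shift rule'' $\opev{g}{\opexp{\pe}{\eta}{1}}{a} $-type relation letting us express $\opev{f}{\pe^{w-1+j}}{a_i}$ as $\opev{f \pe^{j}}{\pe^{w-1}}{a_i}$ up to the conjugation bookkeeping that makes the monomials $f(x)\pe^{r-1}$ in the definition of $P$ appear exactly.

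The main steps, in order, are: (i) fix a block $i$ and a window start $w \in \set{W}$ that falls in an error-free folded column; (ii) substitute $r_{w+r-1}^{(i)} = \opev{f}{\pe^{w+r-2}}{a_i}$ into the interpolation constraint $\mathscr{E}_{Q}(w,i) = 0$ from \eqref{eq:defFuncGenOpFLRS}; (iii) use the shift/product rule for operator evaluation to rewrite $\sum_{r=1}^{\intDim} \opev{Q_r}{\opev{f}{\pe^{w+r-2}}{a_i}}{a_i}$ as $\opev{\big(\sum_{r=1}^{\intDim} Q_r f \pe^{r-1}\big)}{\pe^{w-1}}{a_i}$, so that the whole constraint collapses to $\opev{P}{\pe^{w-1}}{a_i} = 0$; (iv) count: each error-free folded column contributes $\foldPar-\intDim+1$ valid window starts $w$ (the $l$-range in $\set{W}$), and there are $\lenFLRSshot{i}-t_i$ such columns, giving $(\lenFLRSshot{i}-t_i)(\foldPar-\intDim+1)$ vanishing evaluations; (v) argue that the corresponding evaluation points $\pe^{w-1}$ span an $\Fq$-space of the claimed dimension — this follows because the $\pe^{w-1}$ ranging over one error-free folded column together with its $\foldPar-\intDim+1$ sub-window shifts are $\Fq$-linearly independent (they are $\foldPar-\intDim+1$ consecutive powers of $\pe$ within a block of $\foldPar$ consecutive powers, hence $\Fq$-independent in $\Fqm$ since $\foldPar \le m$), and columns in echelon form contribute independent directions across the block.

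The hard part will be step (iii) together with step (v): making the shift identity for generalized operator evaluation precise so that the twist factors $\pe^{r-1}$ land exactly on the $y_r$-slots of $Q$ as written in \eqref{eq:mult_var_skew_poly_FLRS}, and then verifying that the full collection of $(\lenFLRSshot{i}-t_i)(\foldPar-\intDim+1)$ evaluation points per block is genuinely $\Fq$-linearly independent rather than merely nonzero. I expect the independence count to rely on the same column-echelon normalization used in the minimum-distance proof, carried out simultaneously for the error columns and for the sliding-window offsets; once that normalization is set up, the bound follows by the standard fact (cited in the preliminaries) that consecutive $\pe$-powers within an $\foldPar \le m$ block are $\Fq$-independent. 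The remaining bookkeeping — that $\opev{P}{\cdot}{a_i}$ is well-defined, that $P \in \SkewPolyringZeroDer$, and that the substitution is legitimate over the non-commutative ring — is routine and can be dispatched briefly.
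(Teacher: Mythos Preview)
Your approach is correct and essentially identical to the paper's: apply an invertible $\Fq$-column transformation $\mat{T}_i$ to $\E^{(i)}$ (your ``column-reduced echelon form'') so that $\lenFLRSshot{i}-t_i$ columns of the transformed received block are error-free, then observe that the interpolation constraints at those columns collapse to $\opev{P}{\zeta}{a_i}=0$ for $(\lenFLRSshot{i}-t_i)(\foldPar-\intDim+1)$ $\Fq$-independent points; the paper in fact omits your step~(iii) entirely and simply asserts the points are roots ``by construction''. One small imprecision to fix in step~(v): after the transformation the evaluation points are no longer literal powers $\pe^{w-1}$ but $\Fq$-linear combinations thereof (the entries of $\veczeta^{(i)}=\mat{L}\,\mat{T}_i$ in the paper's notation), so the independence argument should run as ``all $\lenShot{i}\leq m$ entries of the locator matrix $\mat{L}$ are distinct $\pe$-powers and hence $\Fq$-independent, and $\mat{T}_i\in\Fq^{\lenFLRSshot{i}\times\lenFLRSshot{i}}$ is invertible, so the relevant $(\foldPar-\intDim+1)\times(\lenFLRSshot{i}-t_i)$ sub-block of $\mat{L}\,\mat{T}_i$ still has $\Fq$-independent entries'' rather than appealing to consecutive powers directly.
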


\begin{proof}
    Since $\rk_q(\mat{E}^{(i)}) = t_i$, there exists a nonsingular matrix $\mat{T}_i \in \Fq^{\lenFLRSshot{i} \times \lenFLRSshot{i}}$ such that $\mat{E}^{(i)}\mat{T}_i$ has only $t_i$ nonzero columns for every $i \in \{1, \ldots, \shots\}$.
    Without loss of generality assume that these columns are the last ones of $\mat{E}^{(i)}\mat{T}_i$ and define $\veczeta^{(i)} = \L \cdot \mat{T}_i$ with $\L \in \Fqm^{\foldPar \times \lenFLRSshot{i}}$ containing the code locators $1, \dots, \pe^{\lenShot{i}-1}$ (cp.~\eqref{eq:defFLRScodeblock}).
    Note that the first $\lenFLRSshot{i}-t_i$ columns of $\R^{(i)} \mat{T}_i = \C^{(i)}\mat{T}_i + \E^{(i)}\mat{T}_i$ are noncorrupted leading to $(\lenFLRSshot{i}-t_i)(\foldPar-\intDim+1)$ noncorrupted interpolation points according to~\eqref{eq:intPointDef}.
    Now, for each $1 \leq i \leq \shots$, the first entries of the $(\lenFLRSshot{i}-t_i)(\foldPar-\intDim+1)$ noncorrupted interpolation points (i.e. the top left submatrix of size $(\lenFLRSshot{i}-t_i) \times (h-s+1)$ of $\zeta^{(i)}$) are by construction both $\Fq$-linearly independent and roots of $P(x)$.
    \qed
\end{proof}

\begin{theorem}[Decoding Radius] \label{thm:decodingRadius}
    Let $Q(x,y_1,\dots,y_\intDim)$ be a nonzero solution of Problem~\ref{prob:intProblemFLRS}.
    If $t=\SumRankWeight(\mat{E})$ satisfies
    \begin{equation}\label{eq:listDecRegionFLRS}
        \setlength{\abovedisplayskip}{5pt}
        \setlength{\belowdisplayskip}{\abovedisplayskip}
        t<\frac{\intDim}{\intDim+1}\left(\frac{\lenFLRS(\foldPar-\intDim+1)-k+1}{\foldPar-\intDim+1}\right)
        ,
    \end{equation}
    then $P \in \SkewPolyringZeroDer$ is the zero polynomial, that is for all $x \in \Fqm$
    \begin{equation}\label{eq:rootFindingEquationFLRS}
        \setlength{\abovedisplayskip}{5pt}
        \setlength{\belowdisplayskip}{\abovedisplayskip}
        P(x)=Q_0(x)+Q_1(x)f(x)+\!\cdots\!+Q_\intDim(x)f(x)\pe^{\intDim-1}=0.
    \end{equation}
\end{theorem}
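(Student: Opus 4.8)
The plan is to compare two bounds on $\deg(P)$: an upper bound coming from the degree constraints of Problem~\ref{prob:intProblemFLRS} together with $\deg(f) < k$, and a lower bound coming from the number of $\Fq$-linearly independent roots of $P$ established in Lemma~\ref{lem:decConditionFLRS}. If the strict inequality~\eqref{eq:listDecRegionFLRS} forces the lower bound to exceed the upper bound, then $P$ cannot be a nonzero skew polynomial and must vanish identically.

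First I would bound $\deg(P)$ from above. By construction $\deg(Q_0) < \degConstraint$ and $\deg(Q_r) < \degConstraint - k + 1$ for $r \in \{1,\dots,\intDim\}$, while $\deg(f) < k$, so each summand $Q_r(x) f(x) \pe^{r-1}$ has degree at most $(\degConstraint - k + 1 - 1) + (k-1) = \degConstraint - 1$. Hence $\deg(P) \le \degConstraint - 1 < \degConstraint$, and I would substitute the explicit value of $\degConstraint$ from Lemma~\ref{lem:degContstraintForExistence}. Next, the lower bound: Lemma~\ref{lem:decConditionFLRS} supplies, for each block $i$, exactly $(\lenFLRSshot{i} - t_i)(\foldPar - \intDim + 1)$ elements that are $\Fq$-linearly independent and are roots of $P$ under the generalized operator evaluation with parameter $a_i$. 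Since $a_1,\dots,a_\shots$ lie in pairwise distinct nontrivial conjugacy classes, the result on roots of skew polynomials quoted at the end of Section~\ref{sec:preliminaries} applies: any nonzero $P$ with these roots satisfies
\begin{equation*}
    \deg(P) \ge \sum_{i=1}^{\shots} (\lenFLRSshot{i} - t_i)(\foldPar - \intDim + 1) = \bigl(\lenFLRS - t\bigr)(\foldPar - \intDim + 1),
\end{equation*}
using $\sum_i \lenFLRSshot{i} = \lenFLRS$ and $\sum_i t_i = t$.

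Then I would combine the two estimates. Assuming $P \neq 0$ we would need $(\lenFLRS - t)(\foldPar - \intDim + 1) \le \deg(P) \le \degConstraint - 1$. Plugging in $\degConstraint = \left\lceil \frac{\lenFLRS(\foldPar - \intDim + 1) + \intDim(k-1) + 1}{\intDim + 1} \right\rceil$ and dropping the ceiling (which only weakens the inequality we want to contradict — care is needed about the direction here, so I would argue via $\degConstraint - 1 < \frac{\lenFLRS(\foldPar-\intDim+1) + \intDim(k-1) + 1}{\intDim+1}$, which does hold), rearranging $(\lenFLRS - t)(\foldPar - \intDim + 1) < \frac{\lenFLRS(\foldPar-\intDim+1) + \intDim(k-1)+1}{\intDim+1}$ for $t$ yields precisely
\begin{equation*}
    t > \frac{\intDim}{\intDim+1}\left(\frac{\lenFLRS(\foldPar - \intDim + 1) - k + 1}{\foldPar - \intDim + 1}\right).
\end{equation*}
Contrapositively, condition~\eqref{eq:listDecRegionFLRS} is exactly what rules this out, so $P = 0$, which by definition of the generalized operator evaluation means $\opev{P}{x}{a_i} = 0$; but since $P$ is the zero polynomial, equation~\eqref{eq:rootFindingEquationFLRS} holds for all $x \in \Fqm$.

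The main obstacle I anticipate is handling the ceiling function cleanly so that the chain of (strict versus non-strict) inequalities goes the right way — it is easy to lose a $+1$ or flip a strictness and end up with an off-by-one in the decoding radius. I would resolve this by working with the pre-ceiling rational quantity throughout: from $\degConstraint < \frac{\lenFLRS(\foldPar-\intDim+1)+\intDim(k-1)+1}{\intDim+1} + 1$ one gets $\degConstraint - 1 < \frac{\lenFLRS(\foldPar-\intDim+1)+\intDim(k-1)+1}{\intDim+1}$, and since $\deg(P) \le \degConstraint - 1$ is an integer bound this is the estimate to feed into the comparison. The other minor point to state carefully is the hypothesis check for the quoted skew-polynomial root bound: the $\zeta_j^{(i)}$ from Lemma~\ref{lem:decConditionFLRS} are $\Fq$-linearly independent within each block, and the $a_i$ are representatives of distinct nontrivial conjugacy classes by Definition~\ref{def:FLRScodes}, so the hypotheses are met and the degrees add across blocks.
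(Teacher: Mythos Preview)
Your approach is essentially identical to the paper's: bound $\deg(P)<\degConstraint$ from the interpolation constraints, bound it below by $(\lenFLRS-t)(\foldPar-\intDim+1)$ via Lemma~\ref{lem:decConditionFLRS} and the skew root-counting result, and compare using the value of $\degConstraint$ from Lemma~\ref{lem:degContstraintForExistence}. One small correction to the arithmetic you flagged: rearranging $(\lenFLRS-t)(\foldPar-\intDim+1)<\frac{\lenFLRS(\foldPar-\intDim+1)+\intDim(k-1)+1}{\intDim+1}$ literally gives $t>\frac{\intDim}{\intDim+1}\cdot\frac{\lenFLRS(\foldPar-\intDim+1)-k+1}{\foldPar-\intDim+1}-\frac{1}{(\intDim+1)(\foldPar-\intDim+1)}$, not exactly the desired bound; you recover the clean statement $t\geq\frac{\intDim}{\intDim+1}\cdot\frac{\lenFLRS(\foldPar-\intDim+1)-k+1}{\foldPar-\intDim+1}$ (and hence the contrapositive $t<\cdots\Rightarrow P=0$) by observing that after clearing the denominator $(\intDim+1)$ both sides become integers, so the strict inequality tightens by one.
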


\begin{proof}
    By Lemma~\ref{lem:decConditionFLRS}, there exist elements $\zeta_1^{(i)}, \ldots, \zeta_{(\lenFLRSshot{i}-t_i)(\foldPar-\intDim+1)}^{(i)}$ in $\Fqm$ that are $\Fq$-linearly independent for each $i \in \{1, \ldots, \shots\}$ such that $\opev{P}{\zeta_j^{(i)}}{a_i} = 0$ for $1 \leq i \leq \shots$ and $1 \leq j \leq (\lenFLRSshot{i}-t_i)(\foldPar-\intDim+1)$.
    By choosing $\degConstraint \leq (\lenFLRS-t)(\foldPar-\intDim+1)$,
    $P(x)$ exceeds the degree bound from~\cite[Prop.~1.3.7]{caruso2019residues} which is possible only if $P(x)=0$.
    Combining the above inequality with $\lenFLRS(\foldPar-\intDim+1) < \degConstraint(\intDim+1)-\intDim(k-1)$ from the proof of Lemma~\ref{lem:degContstraintForExistence} yields the stated decoding radius.
    \qed
\end{proof}

\subsection{Root-Finding Step}

By Theorem~\ref{thm:decodingRadius}, the message polynomial $f\in\SkewPolyringZeroDer_{<k}$ satisfies~\eqref{eq:rootFindingEquationFLRS} if $t$ satisfies~\eqref{eq:listDecRegionFLRS}.
Therefore, we consider the following root-finding problem.
\begin{problem}[Root-Finding Problem] \label{prob:rootFinding}
    Let $Q \in \MultSkewPolyringZeroDer$ be a nonzero solution of Problem~\ref{prob:intProblemFLRS} and let $t$ satisfy constraint~\eqref{eq:listDecRegionFLRS}.
    Find all skew polynomials $f \in \SkewPolyringZeroDer_{<k}$ that satisfy~\eqref{eq:rootFindingEquationFLRS}.
\end{problem}
Problem~\ref{prob:rootFinding} is equivalent to an $\Fqm$-linear system of equations in the unknown
\begin{equation}
    \setlength{\abovedisplayskip}{4pt}
    \setlength{\belowdisplayskip}{\abovedisplayskip}
    \f \defeq (f_0, \aut^{-1}(f_1), \ldots, \aut^{-k+1}(f_{k-1}))^{\top}
    .
\end{equation}
As e.g. in~\cite{wachter2013decoding,BartzSidorenko_FoldedGabidulin2015_DCC}, we use a basis of the interpolation problem's solution space instead of choosing only one solution $Q$ of system~\eqref{eq:intSystem}.
This improvement is justified by the following result.
\begin{lemma}[Number of Interpolation Solutions]\label{lem:intSolutionDim}
    For $d_I \defeq \dim_{q^m} (\ker(\S))$ with $\S$ defined in~\eqref{eq:intSystem}, it holds $d_I \geq \intDim(\degConstraint-k+1)-t(\foldPar-\intDim+1)$.
\end{lemma}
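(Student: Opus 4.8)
The plan is to bound $d_I$ from below by exhibiting an explicit $\Fqm$-subspace of $\ker(\S)$ of the claimed dimension; the naive rank--nullity estimate $d_I\geq(\degConstraint+\intDim(\degConstraint-k+1))-\lenFLRS(\foldPar-\intDim+1)$ is too weak, because it ignores that many interpolation constraints become redundant once the ``correct'' message polynomial $f$ is taken into account. As a first step I would rewrite the interpolation system~\eqref{eq:intSystem} in the block-wise transformed coordinates already used in the proof of Lemma~\ref{lem:decConditionFLRS}: for each $i$ fix the nonsingular $\mat{T}_i\in\Fq^{\lenFLRSshot{i}\times\lenFLRSshot{i}}$ with $\E^{(i)}\mat{T}_i$ having only its last $t_i$ columns nonzero, replace the locator block $\L$ by $\veczeta^{(i)}=\L\mat{T}_i$ and the received block $\R^{(i)}$ by $\R^{(i)}\mat{T}_i$. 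Since generalized operator evaluation $\opev{g}{\cdot}{a}$ is $\Fq$-linear in its first argument and $\mat{T}_i$ acts only on the folded columns, for every window offset $l$ the transformed interpolation constraint at column $j$ equals $\sum_{j'}(\mat{T}_i)_{j',j}$ times the original constraint at column $j'$; invertibility of $\mat{T}_i$ thus shows that the transformed system has the same solution space as~\eqref{eq:intSystem}. In these coordinates the interpolation points split into the $(\lenFLRSshot{i}-t_i)(\foldPar-\intDim+1)$ \emph{noncorrupted} ones per block, coming from the first $\lenFLRSshot{i}-t_i$ columns of $\veczeta^{(i)}$ where $\R^{(i)}\mat{T}_i$ agrees with $\C^{(i)}\mat{T}_i$ --- that is, $(\lenFLRS-t)(\foldPar-\intDim+1)$ in total --- and the remaining $t(\foldPar-\intDim+1)$ \emph{corrupted} ones.

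The technical core is the identity that, for every $Q$ satisfying the degree constraints of Problem~\ref{prob:intProblemFLRS} and with $P=Q_0+Q_1f+\dots+Q_\intDim f\pe^{\intDim-1}$ as in Lemma~\ref{lem:decConditionFLRS}, the transformed interpolation constraint at a noncorrupted point with window offset $l$ in column $j$ of block $i$ equals $\opev{P}{\zeta^{(i)}_{l,j}}{a_i}$, where $\zeta^{(i)}_{l,j}$ denotes the $(l,j)$-entry of $\veczeta^{(i)}$. I would verify this using three facts: since column $j$ of block $i$ carries the locators $\pe^{(j-1)\foldPar},\dots,\pe^{j\foldPar-1}$ and $l+r-1\leq\foldPar$ throughout, the $\Fq$-linear column operation $\mat{T}_i$ preserves the relation $\zeta^{(i)}_{l+r-1,j}=\pe^{r-1}\zeta^{(i)}_{l,j}$; on a noncorrupted column one has $(\R^{(i)}\mat{T}_i)_{l+r-1,j}=\opev{f}{\zeta^{(i)}_{l+r-1,j}}{a_i}=\opev{f}{\pe^{r-1}\zeta^{(i)}_{l,j}}{a_i}$; and the product rule $\opev{gh}{\beta}{a}=\opev{g}{\opev{h}{\beta}{a}}{a}$ for generalized operator evaluation, applied with $h=\pe^{r-1}$ (so that $\opev{f\pe^{r-1}}{\zeta}{a_i}=\opev{f}{\pe^{r-1}\zeta}{a_i}$), turns the $r$-th summand $\opev{Q_r}{(\R^{(i)}\mat{T}_i)_{l+r-1,j}}{a_i}$ of the constraint into $\opev{Q_r f\pe^{r-1}}{\zeta^{(i)}_{l,j}}{a_i}$. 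Summing over $r=0,\dots,\intDim$ yields exactly $\opev{P}{\zeta^{(i)}_{l,j}}{a_i}$, so whenever $P$ is the zero polynomial all $(\lenFLRS-t)(\foldPar-\intDim+1)$ noncorrupted transformed constraints are automatically satisfied.

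It then remains to count. Let $\mathcal{U}$ be the $\Fqm$-space of all $Q$ satisfying the degree constraints of Problem~\ref{prob:intProblemFLRS} with $P=0$; since $\deg(Q_r f\pe^{r-1})\leq(\degConstraint-k)+(k-1)<\degConstraint$ for $r\geq1$ and $P$ depends $\Fqm$-linearly on $(Q_0,\dots,Q_\intDim)$, every choice of $Q_1,\dots,Q_\intDim$ within their degree bounds determines a unique admissible $Q_0$ with $P=0$, whence $\dim_{q^m}\mathcal{U}=\intDim(\degConstraint-k+1)$. By the previous paragraph every element of $\mathcal{U}$ already fulfils all noncorrupted transformed constraints, so the intersection of $\mathcal{U}$ with the solution set of the remaining $t(\foldPar-\intDim+1)$ corrupted constraints is a subspace of $\ker(\S)$ of dimension at least $\intDim(\degConstraint-k+1)-t(\foldPar-\intDim+1)$, which is the claim. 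I expect the main obstacle to be the bookkeeping in the second paragraph --- keeping the column transformation $\mat{T}_i$ compatible with the sliding-window folding structure and checking that $\zeta^{(i)}_{l+r-1,j}=\pe^{r-1}\zeta^{(i)}_{l,j}$ survives it --- whereas the dimension count and the passage to a subspace of $\ker(\S)$ are routine linear algebra.
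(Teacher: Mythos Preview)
Your argument is correct, but it follows a different route from the paper's own proof. The paper does \emph{not} construct an explicit subspace of $\ker(\S)$; instead it bounds $\rk_{q^m}(\S)$ directly and then applies rank--nullity. Concretely, the paper observes that the first $\degConstraint$ columns of $\S$ form $\left(\opMoore{\degConstraint}{\p_0}{\a}\right)^{\top}$, which has full rank $\degConstraint$; that in the error-free case all remaining columns lie in their span (so contribute nothing to the rank); and that an error of sum-rank weight $t$ corrupts at most $t(\foldPar-\intDim+1)$ interpolation points and hence can raise the rank by at most that amount. This yields $\rk_{q^m}(\S)\leq\degConstraint+t(\foldPar-\intDim+1)$, and rank--nullity gives the claim in one line. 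So the ``naive rank--nullity'' approach you dismissed is exactly what the paper uses---the point is not to avoid rank--nullity but to bound the rank more sharply than by the row count $\lenFLRS(\foldPar-\intDim+1)$.

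Your constructive approach, on the other hand, exhibits the subspace $\mathcal{U}=\{Q:P=0\}$ of dimension $\intDim(\degConstraint-k+1)$ and then intersects it with the $t(\foldPar-\intDim+1)$ corrupted constraints. This is longer and requires the coordinate change via $\mat{T}_i$ plus the verification that the sliding-window relation $\zeta^{(i)}_{l+r-1,j}=\pe^{r-1}\zeta^{(i)}_{l,j}$ survives the $\Fq$-linear column transformation (which it does, as you note), but it has the advantage of describing concretely \emph{which} elements of $\ker(\S)$ are being counted---namely those that already annihilate the message polynomial. The two arguments are dual in the sense that ``remaining columns are combinations of the first $\degConstraint$ in the error-free case'' is the row-side statement of your column-side observation that $P=0$ forces all noncorrupted constraints to vanish.
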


\begin{proof}
    The first $\degConstraint$ columns of $\S$ are given as $\left(\opMoore{\degConstraint}{\p_0}{\a}\right)^{\top}$.
    Since the $\shots$ blocks of $\p_0$ consist of pairwise distinct powers of $\pe$, the elements of a single block are $\Fq$-linearly independent.
    Hence $\rk_{q^m}(\opMoore{\degConstraint}{\p_0}{\a}) = \min (\degConstraint, \lenFLRS (\foldPar - \intDim + 1)) = \degConstraint$.
    With the absence of an error, the remaining columns consist of linear combinations of the first $\degConstraint$ ones and do not increase the rank.
    If the error $\E$ with $\SumRankWeight(\E) = t$ is introduced, at most $t(\foldPar - \intDim + 1)$ interpolation points are corrupted according to Lemma~\ref{lem:decConditionFLRS}.
    As a consequence, these columns can increase the rank of $\S$ by at most $t(\foldPar - \intDim + 1)$.
    Thus, $\rk_{q^m}(\S) \leq \degConstraint + t(\foldPar - \intDim + 1)$ and the rank-nullity theorem directly yields
    $d_I
    = \degConstraint (\intDim + 1) - \intDim (k-1) - \rk_{q^m}(\S)
    \geq \intDim (\degConstraint - k + 1) - t(\foldPar - \intDim + 1)$.
    \qed
\end{proof}

Let now $Q^{(1)}, \ldots, Q^{(d_I)} \in \MultSkewPolyringZeroDer$ form a basis of the solution space of Problem~\ref{prob:intProblemFLRS} and denote the coefficients of $Q^{(u)}$ by $q_{i,j}^{(u)}$ for all $1 \leq u \leq d_I$ (cp.~\eqref{eq:Qcoefficients}).
Define further the ordinary polynomials
\begin{equation}\label{eq:rf_polys}
    B_j^{(u)}(x) = q_{1, j}^{(u)} + q_{2, j}^{(u)} x + \cdots + q_{\intDim, j}^{(u)} x^{\intDim-1} \in \Polyring
\end{equation}
for $j \in \{0, \ldots, \degConstraint-k\}$ and $u \in \{1, \ldots, d_I\}$ and the additional notations
\vspace*{-5pt}
\begin{align*}
    \b_{j,a} &= \left( \aut^{-a}\left(B_j^{(1)}(\aut^{a}(\pe))\right), \ldots, \aut^{-a}\left(B_j^{(d_I)}(\aut^{a}(\pe))\right) \right)^{\top} \\
    \text{and} \qquad
    \q_a &= \left( \aut^{-a}\left(q_{0,a}^{(1)}\right), \ldots, \aut^{-a}\left(q_{0,a}^{(d_I)}\right) \right)^{\top}
\end{align*}
for $0 \leq j \leq \degConstraint-k$ and $0 \leq a \leq \degConstraint-1$.
Then the root-finding system is given as
\begin{gather}\label{eq:improvedRootFindingSystemFLRS}
    \B \cdot \f = -\q
    \\
    \text{with} \quad
    \B \defeq
    \begin{pmatrix}
        \b_{0,0} & & & \\
        \b_{1,1} & \b_{0,1} & & \\[-3pt]
        \vdots & \b_{1,2} & \ddots & \\[-3pt]
        \b_{\degConstraint-k,\degConstraint-k} & \vdots & & \b_{0,k-1} \\
        & \b_{\degConstraint-k,\degConstraint-k+1} & & \b_{1,k} \\
        & & \ddots & \vdots \\[-3pt]
        & & & \b_{\degConstraint-k,\degConstraint-1}
    \end{pmatrix}
    \quad \text{and} \quad
    \q \defeq
    \begin{pmatrix}
        \q_0 \\
        \vdots \\
        \q_{\degConstraint-1}
    \end{pmatrix}
    .
    \notag
\end{gather}
The root-finding system~\eqref{eq:improvedRootFindingSystemFLRS} can be solved by back substitution in at most $\OCompl{k^2}$ operations in $\Fqm$ since we can focus on (at most) $k$ nontrivial equations from different blocks of $d_I$ rows.
Note also that the transmitted message polynomial $f(x)$ is always a solution of~\eqref{eq:improvedRootFindingSystemFLRS} as long as $t$ satisfies the decoding radius in~\eqref{eq:listDecRegionFLRS}.

\subsection{Interpolation-Based List and Probabilistic Unique Decoding}

The interpolation-based scheme from above can be used for list decoding or as a probabilistic unique decoder.
The list decoder returns all solutions of~\eqref{eq:improvedRootFindingSystemFLRS}.
\begin{lemma}[Worst-Case List Size] \label{lem:worstCaseListSize}
    The list size is upper bounded by $q^{m(\intDim-1)}$.
\end{lemma}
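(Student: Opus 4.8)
The plan is to bound the number of solutions $f$ of the root-finding equation~\eqref{eq:rootFindingEquationFLRS} by analyzing it as an affine equation over a suitable $\Fqm$-vector space. First I would fix one particular solution, namely the transmitted message polynomial $f$, which is guaranteed to satisfy~\eqref{eq:rootFindingEquationFLRS} by Theorem~\ref{thm:decodingRadius}. Any other solution $\tilde f \in \SkewPolyringZeroDer_{<k}$ then yields a difference $g \defeq f - \tilde f \in \SkewPolyringZeroDer_{<k}$, and subtracting the two instances of~\eqref{eq:rootFindingEquationFLRS} gives the homogeneous relation
\begin{equation*}
    Q_1(x)g(x) + Q_2(x)g(x)\pe + \cdots + Q_\intDim(x)g(x)\pe^{\intDim-1} = 0.
\end{equation*}
So it suffices to bound the number of $g \in \SkewPolyringZeroDer_{<k}$ satisfying this homogeneous equation, and the list size is at most that count.

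Next I would argue that this homogeneous equation, for fixed $Q_1, \dots, Q_\intDim$ not all zero, is an $\Fqm$-linear constraint on the coefficient vector of $g$ whose solution space has dimension at most $\intDim - 1$ over $\Fqm$. The key observation is that at least one $Q_r$ with $r \geq 1$ is nonzero (otherwise $Q = Q_0$ would have to vanish at too many points by the degree argument, contradicting $Q \neq 0$ — this uses the same reasoning as in the existence/roots lemmas). Writing $g(x) = g_0 + g_1 x + \cdots + g_{k-1}x^{k-1}$, the map sending $g$ to the left-hand side is $\Fqm$-linear in $(g_0, \dots, g_{k-1})$ after the usual $\sigma^{-i}$-twist of coordinates. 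The heart of the bound is that, regarding the equation in the single "variable" $y = g(x)$ over the skew-polynomial structure, the polynomial $Q_1(x) + Q_2(x)\cdot(\cdot)\pe + \cdots$ acts essentially like multiplication by a skew polynomial of $y$-degree at most $\intDim - 1$ evaluated in the operator sense; the nonzero such polynomial kills a space of solutions $y$ of $\Fq$-dimension bounded by its degree, hence of $\Fqm$-dimension at most $\intDim - 1$ after accounting for how $g$ enters. I would make this precise by substituting a formal basis expansion and observing that the coefficient of the top-degree term forces a nontrivial linear dependency, reducing the free dimension to at most $\intDim - 1$.

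Finally, once the $\Fqm$-solution space of $g$ has dimension $\leq \intDim - 1$, it contains at most $(q^m)^{\intDim-1} = q^{m(\intDim-1)}$ elements, and since the list of valid message polynomials $\tilde f$ injects into this space via $\tilde f \mapsto f - \tilde f$, the list size is bounded by $q^{m(\intDim-1)}$ as claimed.

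The main obstacle I anticipate is making the dimension count in the middle step fully rigorous: the equation $\sum_{r=1}^{\intDim} Q_r(x)g(x)\pe^{r-1} = 0$ mixes left-multiplication by the skew polynomials $Q_r(x)$ with right-multiplication by the central-looking but non-central scalars $\pe^{r-1}$, so one cannot simply factor out $g(x)$. I would handle this by viewing the whole expression as the generalized operator evaluation $Q(x, g(x), g(x)\pe, \dots, g(x)\pe^{\intDim-1})$ with $Q_0 = 0$ and reducing to counting the kernel of the resulting $\Fqm$-linear map on the $k$-dimensional coefficient space of $g$; the bound $\intDim - 1$ should drop out by comparing the rank of this map against a block-Moore-type submatrix, analogous to the rank arguments used for $\mathfrak{M}$ in Lemma~\ref{lem:intSolutionDim}, and by invoking that a nonzero skew polynomial of degree $< \intDim$ in the appropriate sense has at most $\intDim - 1$ independent operator roots.
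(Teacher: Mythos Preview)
Your overall reduction is correct and is exactly the paper's: the list is an affine $\Fqm$-subspace of $\SkewPolyringZeroDer_{<k}$, so its size is $q^{m}$ raised to the $\Fqm$-dimension of the homogeneous kernel, and one must show that dimension is at most $\intDim-1$. The gap you flag in the middle step is real, and the tools you reach for (generalized-operator roots of a skew polynomial, Moore-type rank bounds as in Lemma~\ref{lem:intSolutionDim}) are heavier than what is actually needed.

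The paper closes the gap with a purely commutative device. It works with the explicit root-finding matrix $\B$ of~\eqref{eq:improvedRootFindingSystemFLRS} and passes to its lower block-triangular truncation $\B_\triangle$ (the first $d_I k$ rows), whose diagonal blocks are $\b_{0,0},\dots,\b_{0,k-1}$. The first entry of $\b_{0,a}$ equals $\aut^{-a}\bigl(B_0^{(1)}(\aut^a(\pe))\bigr)$, where $B_0^{(1)}(x)=\sum_{r=1}^{\intDim} q_{r,0}^{(1)}\,x^{r-1}\in\Fqm[x]$ is an \emph{ordinary} polynomial of degree at most $\intDim-1$. Such a polynomial has at most $\intDim-1$ roots, so at most $\intDim-1$ of the diagonal blocks can vanish; hence $\rk_{q^m}(\B)\geq k-\intDim+1$ and $\dim_{q^m}\ker(\B)\leq\intDim-1$.

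In your coordinates this is exactly what emerges when you write out the coefficient of $x^a$ in $\sum_{r=1}^{\intDim} Q_r(x)g(x)\pe^{r-1}=0$: the term involving $g_a$ carries the scalar $\sum_r q_{r,0}\,\aut^a(\pe)^{r-1}=B_0(\aut^a(\pe))$, and all other terms involve only $g_0,\dots,g_{a-1}$. So the ``degree-$(\intDim-1)$ object with at most $\intDim-1$ roots'' you were searching for is the commutative polynomial $B_0$ built from the \emph{constant terms} of $Q_1,\dots,Q_\intDim$, and the argument is plain back-substitution in a triangular system rather than a skew or Moore-matrix computation.
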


\begin{proof}
    With $d_{RF} \defeq \dim_{q^m}(\ker(\B))$, the list size equals $q^{m \cdot d_{RF}}$ and $d_{RF} = k - \rk_{q^m}(\B)$ due to the rank-nullity theorem.
    Let $\B_{\triangle}$ denote the lower triangular matrix consisting of the first $d_I k$ rows of $\B$.
    Then, $\rk_{q^m}(\B) \geq \rk_{q^m}(\B_{\triangle})$ and the latter is lower bounded by the number of nonzero vectors on its diagonal.
    These vectors are $\b_{0,0}, \ldots, \b_{0,k-1}$ and we focus on their first components while neglecting application of $\aut$.
    Each of them is given as the evaluation of $B_0^{(1)}$ at another conjugate of $\pe$.
    Since $B_0^{(1)}$ can have at most $\intDim - 1$ roots, it follows that at most $\intDim - 1$ of the vectors on the diagonal can be zero.
    Thus, $\rk_{q^m}(\B) \geq k - \intDim + 1$ and, as a consequence, $d_{RF} \leq \intDim - 1$.
    \qed
\end{proof}

Note that, despite the exponential worst-case list size, an $\Fqm$-basis of the list can be found in polynomial time.
Theorem~\ref{thm:listDecoding} summarizes the results for list decoding of \ac{FLRS} codes and Figure~\ref{fig:radiusFLRSopt} illustrates the achievable decoding region.
In particular, the significant improvement of the normalized decoding radius $\tau\defeq t/\lenFLRS$ of \ac{FLRS} codes upon \ac{LRS} codes is shown.
\begin{theorem}[List Decoding] \label{thm:listDecoding}
    Consider a folded linearized Reed--Solomon code $\foldedLinRS{\a,\pe,\shots,\foldPar;\lenFLRS,k}$ and a codeword $\C$ that is transmitted over a sum-rank channel with fixed error weight
    \begin{equation*}
        t < \frac{\intDim}{\intDim+1} \left( \frac{\lenFLRS(\foldPar-\intDim+1)-k+1}{\foldPar-\intDim+1} \right)
    \end{equation*}
    for an interpolation parameter $1 \leq \intDim \leq h$.
    Then, list decoding with a list size of at most $q^{m (\intDim - 1)}$ can be achieved in at most $\OCompl{\intDim \len^2}$ operations in $\Fqm$.
\end{theorem}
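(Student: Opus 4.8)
The plan is to assemble Theorem~\ref{thm:listDecoding} directly from the machinery already in place, so the proof is essentially a bookkeeping argument that combines three earlier results. First I would invoke Lemma~\ref{lem:degContstraintForExistence} to fix the degree parameter $\degConstraint=\left\lceil\frac{\lenFLRS(\foldPar-\intDim+1)+\intDim(k-1)+1}{\intDim+1}\right\rceil$ so that Problem~\ref{prob:intProblemFLRS} is guaranteed to have a nonzero solution $Q$; this is the step that makes the scheme well-defined. Then I would appeal to Theorem~\ref{thm:decodingRadius}: under the stated weight bound $t<\frac{\intDim}{\intDim+1}\left(\frac{\lenFLRS(\foldPar-\intDim+1)-k+1}{\foldPar-\intDim+1}\right)$, every nonzero $Q$ in the solution space forces the associated univariate polynomial $P$ to vanish identically, so that the transmitted message polynomial $f$ satisfies the root-finding equation~\eqref{eq:rootFindingEquationFLRS}. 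Consequently $f$ lies in the solution set of Problem~\ref{prob:rootFinding}, i.e.\ in the kernel of $\B$ from~\eqref{eq:improvedRootFindingSystemFLRS} (shifted by $\q$), so returning all solutions of that system yields a list that provably contains $f$.

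Second, I would bound the list size. By Lemma~\ref{lem:worstCaseListSize} the number of solutions of the root-finding system is at most $q^{m(\intDim-1)}$, which gives the claimed worst-case list size. Here I would stress (as the paragraph before the theorem already notes) that even though this bound is exponential in $\intDim$ and $m$, an $\Fqm$-basis of the affine solution space is computed in polynomial time, so the decoder outputs a compact description rather than enumerating the list.

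Third, I would tally the complexity. The interpolation system~\eqref{eq:intSystem} is solved with the skew Kötter interpolation of~\cite{liu2014kotter}, as noted after~\eqref{eq:intSystem}, in $\OCompl{\intDim\len^2}$ operations in $\Fqm$; this also produces a basis $Q^{(1)},\dots,Q^{(d_I)}$ of the solution space rather than a single $Q$, as needed for the improved root-finding step. Setting up and solving the root-finding system~\eqref{eq:improvedRootFindingSystemFLRS} by back substitution costs $\OCompl{k^2}$ operations in $\Fqm$, which is dominated by the interpolation cost since $k\leq\len$. Summing the two contributions gives the overall $\OCompl{\intDim\len^2}$ bound, which completes the argument.

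The only real subtlety — and the step I would present most carefully rather than as a one-liner — is the logical chain ensuring \emph{correctness}: namely that $f$ is genuinely among the returned solutions. This requires checking that the degree choice of Lemma~\ref{lem:degContstraintForExistence} is simultaneously compatible with the hypothesis $\degConstraint\leq(\lenFLRS-t)(\foldPar-\intDim+1)$ used in the proof of Theorem~\ref{thm:decodingRadius}; but this compatibility is exactly what the inequality~\eqref{eq:listDecRegionFLRS} on $t$ encodes, so the bound in the theorem statement is not an extra assumption but precisely the condition under which all the pieces fit together. Everything else is routine concatenation of the cited lemmas, so the proof is short.
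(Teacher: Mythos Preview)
Your proposal is correct and matches the paper's approach: the paper presents Theorem~\ref{thm:listDecoding} explicitly as a summary of the preceding results (no separate proof is given), and your assembly of Lemma~\ref{lem:degContstraintForExistence}, Theorem~\ref{thm:decodingRadius}, Lemma~\ref{lem:worstCaseListSize}, and the stated complexities of the interpolation and root-finding steps is exactly the intended argument. Your additional remark on the compatibility between the degree choice and the inequality~\eqref{eq:listDecRegionFLRS} is a welcome clarification but not a departure from the paper's reasoning.
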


\begin{figure}[ht!]
    \centering
\begin{tikzpicture}

\begin{axis}[%
xmin=0,
xmax=1,
xlabel={Code Rate R},
xmajorgrids,
ymin=0,
ymax=1,
ylabel={Fraction of correctable errors $\tau$},
ymajorgrids,
legend style={legend cell align=left,align=left,draw=white!15!black}
]
\addplot [style=SB]
  table[row sep=crcr]{%
0	1\\
0.05	0.95\\
0.1	0.9\\
0.15	0.85\\
0.2	0.8\\
0.25	0.75\\
0.3	0.7\\
0.35	0.65\\
0.4	0.6\\
0.45	0.55\\
0.5	0.5\\
0.55	0.45\\
0.6	0.4\\
0.65	0.35\\
0.7	0.3\\
0.75	0.25\\
0.8	0.2\\
0.85	0.15\\
0.9	0.1\\
0.95	0.05\\
1	0\\
};
\addlegendentry{Singleton Bound};

\addplot [style=LRS]
  table[row sep=crcr]{%
0 0.5\\
0.05  0.475\\
0.1 0.45\\
0.15  0.425\\
0.2 0.4\\
0.25  0.375\\
0.3 0.35\\
0.35  0.325\\
0.4 0.3\\
0.45  0.275\\
0.5 0.25\\
0.55  0.225\\
0.6 0.2\\
0.65  0.175\\
0.7 0.15\\
0.75  0.125\\
0.8 0.1\\
0.85  0.075\\
0.9 0.05\\
0.95  0.025\\
1 0\\
};
\addlegendentry{\ac{LRS} Code};

\addplot [style=FLRS]
  table[row sep=crcr]{%
0.0 0.9615384615384616 \\
0.05 0.8406593406593407 \\
0.1 0.7676470588235295 \\
0.15 0.7037037037037037 \\
0.2 0.6447368421052632 \\
0.25 0.5892857142857143 \\
0.3 0.5357142857142857 \\
0.35 0.4861111111111111 \\
0.4 0.4365079365079365 \\
0.45 0.39090909090909093 \\
0.5 0.34545454545454546 \\
0.55 0.3016304347826087 \\
0.6 0.2608695652173913 \\
0.65 0.22010869565217392 \\
0.7 0.18055555555555555 \\
0.75 0.14583333333333334 \\
0.8 0.1111111111111111 \\
0.85 0.0763888888888889 \\
0.9 0.05 \\
0.95 0.025 \\
1.0 0.0 \\
};
\addlegendentry{FLRS Code};

\end{axis}
\end{tikzpicture}%
    \caption{Normalized decoding radius $\tau \defeq \frac{t}{\lenFLRS}$ vs. code rate $R \defeq \frac{k}{\lenFLRS}$ for an \ac{FLRS} code with $\foldPar = 25$ and optimal decoding parameter $\intDim \leq \foldPar$ for each code rate.}\label{fig:radiusFLRSopt}
    \vspace*{-10pt}
\end{figure}
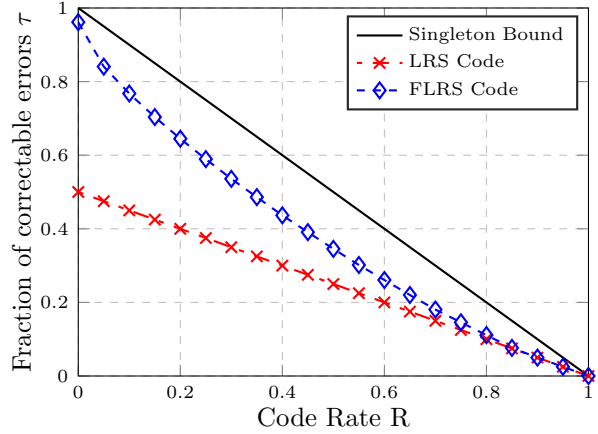

A different concept is probabilistic unique decoding where the decoder either returns a unique solution or declares a failure.
In our setting, a failure occurs exactly when the root-finding matrix $\B$ is rank-deficient.
Similar to~\cite{BartzSidorenko_FoldedGabidulin2015_DCC} we now derive a heuristic upper bound on this probability $\mathbb{P}\left( \rk_{q^m}(\B) < k \right)$.
\begin{lemma}[Decoding Failure Probability] \label{lem:failureProbabilityBound}
    Assume that the coefficients of the polynomials $B_0^{(u)}(x) \in \Polyring$ from~\eqref{eq:rf_polys} for $u \in \{1, \ldots, d_I\}$ are independent and have a uniform distribution among $\Fqm$.
    Then it holds that
    \begin{equation}\label{eq:heuristic_upper_bound}
        \mathbb{P}\left( \rk_{q^m}(\B) < k \right) \lesssim k \cdot \left( \frac{k}{q^m} \right)^{d_I}
        ,
    \end{equation}
    where $\lesssim$ indicates that the bound is a heuristic approximation.
\end{lemma}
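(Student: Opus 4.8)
The plan is to bound the failure event $\{\rk_{q^m}(\B) < k\}$ by the event that some $k\times k$ submatrix built from the diagonal blocks of $\B$ is singular, and then to estimate that probability under the stated heuristic independence/uniformity assumption on the coefficients of the $B_0^{(u)}$. First I would recall from the proof of Lemma~\ref{lem:worstCaseListSize} that $\rk_{q^m}(\B)\geq\rk_{q^m}(\B_{\triangle})$, where $\B_{\triangle}$ is the lower‑triangular matrix formed by the first $d_Ik$ rows of $\B$, and that its block diagonal consists of the vectors $\b_{0,0},\dots,\b_{0,k-1}\in\Fqm^{d_I}$, the $a$-th one being the (conjugate of the) evaluation of the polynomials $B_0^{(1)},\dots,B_0^{(d_I)}$ at $\aut^{a}(\pe)$. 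A necessary condition for $\rk_{q^m}(\B)<k$ is therefore that the $d_I\times k$ matrix $\M$ whose columns are $\b_{0,0},\dots,\b_{0,k-1}$ fails to have full column rank $k$; in particular (since $d_I\geq k$ is the interesting regime) that every $k\times k$ minor of $\M$ vanishes.

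Next I would translate this into a probability estimate. Writing $B_0^{(u)}(x)=\sum_{r=0}^{\intDim-1} q_{r+1,0}^{(u)} x^{r}$, the matrix $\M$ factors (up to the harmless coordinatewise $\aut^{-a}$ twists, which are $\Fqm$-linear bijections and do not affect rank) as $\M = \mat{C}\cdot\mat{V}$, where $\mat{C}\in\Fqm^{d_I\times\intDim}$ collects the coefficient vectors $(q_{1,0}^{(u)},\dots,q_{\intDim,0}^{(u)})$ as rows and $\mat{V}\in\Fqm^{\intDim\times k}$ is the (generalized) Vandermonde-type matrix with columns indexed by the conjugates $\pe,\aut(\pe),\dots,\aut^{k-1}(\pe)$. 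The factor $\mat{V}$ has full rank $\intDim$ by the Moore/Vandermonde rank statement recalled in the preliminaries, so $\rk(\M)=\rk(\mat{C})$ and the failure event is contained in $\{\rk_{q^m}(\mat{C})<\intDim\}$ — but that alone is too weak; instead I would fix $k$ columns and argue directly. Under the heuristic that the coefficients $q_{r,0}^{(u)}$ are independent and uniform on $\Fqm$, each column $\b_{0,a}$ is an independent uniformly random vector in a fixed $\intDim$-dimensional $\Fqm$-subspace (the column space of $\mat{C}$ conditioned on $\mat{C}$, or equivalently the image of the random linear map). Conditioning on $\mat{C}$ having full rank $\intDim$ (which holds with overwhelming probability, contributing a negligible correction), the columns $\b_{0,0},\dots,\b_{0,k-1}$ are i.i.d. uniform in an $\intDim$-dimensional space; the probability that a fixed one of them lies in the span of $\intDim-1$ previously chosen vectors is at most $q^{m(\intDim-1)}/q^{m\intDim}=q^{-m}$. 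Union-bounding over the $k$ columns and then using that we actually need $d_I$-dimensional vectors (not just $\intDim$-dimensional) to drop rank gives the factor $(k/q^m)^{d_I}$: more precisely, the probability that the $d_I\times k$ matrix $\M$ has rank $<k$ is, by a standard counting argument, at most $\binom{k}{k-1}$ times the probability that a single new column lies in a fixed $(k-1)$-dimensional subspace of $\Fqm^{d_I}$, namely $k\cdot q^{-m d_I}\cdot q^{m(k-1)}$ — and folding the $q^{m(k-1)}$ together with the $k-1$ subspace dimensions yields the claimed $k\,(k/q^m)^{d_I}$ up to the $\lesssim$ slack. I would present this last counting step carefully, citing the analogous estimate in~\cite{BartzSidorenko_FoldedGabidulin2015_DCC}.

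The main obstacle is that the entries of $\B$ are \emph{not} actually independent and uniform: the $q_{r,0}^{(u)}$ are the low-order coefficients of an $\Fqm$-basis of $\ker(\S)$, hence algebraically constrained by the received word and by each other, and moreover the diagonal blocks $\b_{0,a}$ of $\B_{\triangle}$ share the \emph{same} coefficient vectors evaluated at \emph{different} conjugates, so they are far from independent as random vectors. This is exactly why the statement is flagged as heuristic: the honest part of the argument is the deterministic reduction $\rk(\B)\geq\rk(\B_{\triangle})\geq\rk(\M)$ together with the full-rank factorization $\M=\mat{C}\mat{V}$, and the probabilistic part is an \emph{idealized} model in which one pretends the $d_I$ basis polynomials $B_0^{(u)}$ behave like independent uniform random elements of $\Polyring$ of degree $<\intDim$. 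I would therefore state the randomness assumption explicitly (as the lemma already does), carry out the clean counting bound under that model, and remark that simulations in the following section confirm the approximation is accurate in practice; I would not attempt to make the independence rigorous.
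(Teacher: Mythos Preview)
Your reduction to the matrix $\M$ is the wrong one, and the argument does not recover the stated bound. The paper's proof uses the block lower–triangular structure of $\B_{\triangle}$ differently: since the $k$ diagonal blocks $\b_{0,0},\dots,\b_{0,k-1}$ are column vectors in $\Fqm^{d_I}$, one has $\rk_{q^m}(\B_{\triangle})=k$ as soon as each $\b_{0,a}$ is merely \emph{nonzero}. Hence $\mathbb{P}(\rk_{q^m}(\B)<k)\leq\mathbb{P}(\exists\,a:\b_{0,a}=0)$. After stripping the harmless $\aut^{-a}$, the vector $(B_0^{(u)}(\aut^a(\pe)))_{a=0}^{k-1}$ is a length-$k$ Reed--Solomon codeword in the sense of~\cite{BartzSidorenko_FoldedGabidulin2015_DCC}, and the paper simply quotes the probability estimate proved there to obtain $k\,(k/q^m)^{d_I}$.

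By contrast, you ask that the $d_I\times k$ matrix $\M$ with columns $\b_{0,a}$ have full column rank $k$. This is much stronger than ``every column nonzero'' and in fact is typically impossible: your own factorization $\M=\mat{C}\,\mat{V}$ with $\mat{C}\in\Fqm^{d_I\times\intDim}$ shows $\rk_{q^m}(\M)\leq\min(d_I,\intDim)$, and in the regime of interest both $d_I$ (often equal to the small threshold $\mu$) and $\intDim$ are smaller than $k$, so $\rk_{q^m}(\M)<k$ with probability one and your bound is vacuous. The subsequent ``standard counting argument'' gives $k\,q^{m(k-1)}/q^{md_I}$, which is not the claimed $k\,(k/q^m)^{d_I}$; the proposed ``folding'' of $q^{m(k-1)}$ into $k^{d_I}$ is not a valid step (for the paper's simulated parameters $q^m=729$, $k=2$, $d_I=1$ your expression equals $2$, whereas the intended bound is $\approx 5.5\cdot 10^{-3}$). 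The fix is to replace the rank condition on $\M$ by the much weaker event ``some $\b_{0,a}$ vanishes'', apply the union bound over $a$, and use the Reed--Solomon–codeword interpretation as in~\cite[Lemma~8]{BartzSidorenko_FoldedGabidulin2015_DCC}.
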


\begin{proof}
    Define $\B_{\triangle}$ as in the proof of Lemma~\ref{lem:worstCaseListSize} and note that $\mathbb{P}\left( \rk_{q^m}(\B) < k \right) \leq \mathbb{P}\left( \rk_{q^m}(\B_{\triangle}) < k \right)$ allows to focus on the latter. $\rk_{q^m}(\B_{\triangle}) = k$ is equivalent to all vectors $\b_{0, 0}, \ldots, \b_{0, k-1}$ being nonzero.
    Because application of $\aut$ can be neglected, these vectors can be interpreted as codewords of a Reed--Solomon code.
    The proof of~\cite[Lemma 8]{BartzSidorenko_FoldedGabidulin2015_DCC} deals with this setting and yields the result.
    \qed
\end{proof}

We introduce a threshold parameter $\mu \in \NN^{\ast}$ and enforce $d_I \geq \mu$ which yields a degree constraint $\degConstraint\!=\!\big\lceil \frac{\lenFLRS(\foldPar-\intDim+1)+\intDim(k-1)+\mu}{\intDim+1} \big\rceil$.
Theorem~\ref{thm:probabilisticUniqueDecoding} provides a summary for probabilistic unique decoding of \ac{FLRS} codes incorporating this threshold.

\begin{theorem}[Probabilistic Unique Decoding] \label{thm:probabilisticUniqueDecoding}
    Consider the \ac{FLRS} code $\foldedLinRS{\a,\pe,\shots,\foldPar;\lenFLRS,k}$ and assume that the coefficients of the polynomials $B_0^{(u)}(x)$ for $u \in \{1, \ldots, \mu\}$ are independent and uniformly distributed among $\Fqm$.
    For an interpolation parameter $1 \leq \intDim \leq h$ and a dimension threshold $\mu \in \NN^{\ast}$, transmit a codeword $\C$ over a sum-rank channel with fixed error weight
    \begin{equation}
        \setlength{\abovedisplayskip}{7pt}
        \setlength{\belowdisplayskip}{\abovedisplayskip}
        t \leq \frac{\intDim}{\intDim+1} \left( \frac{\lenFLRS(\foldPar-\intDim+1)-k+1}{\foldPar-\intDim+1} \right) - \frac{\mu}{(\intDim+1)(\foldPar-\intDim+1)}.
    \end{equation}
    Then, $\C$ can be uniquely recovered with complexity $\OCompl{\intDim \len^2}$ in $\Fqm$ and with an approximate probability of at least
    \begin{equation}
        \setlength{\abovedisplayskip}{7pt}
        \setlength{\belowdisplayskip}{\abovedisplayskip}
        1 - k \cdot \left( \frac{k}{q^m} \right)^{\mu}.
    \end{equation}
\end{theorem}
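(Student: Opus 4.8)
The plan is to assemble the pieces already established in the excerpt: the existence lemma, the decoding-radius theorem, the lower bound on the interpolation solution space, the root-finding system, and the heuristic failure bound. First I would invoke the modified degree constraint $\degConstraint=\big\lceil \frac{\lenFLRS(\foldPar-\intDim+1)+\intDim(k-1)+\mu}{\intDim+1}\big\rceil$ and check that, by the same counting argument as in Lemma~\ref{lem:degContstraintForExistence}, a nonzero solution to Problem~\ref{prob:intProblemFLRS} still exists; the only change is that the slack between equations and unknowns is now at least $\mu$ rather than $1$.

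Next I would verify that the stated error weight $t \leq \frac{\intDim}{\intDim+1}\big(\frac{\lenFLRS(\foldPar-\intDim+1)-k+1}{\foldPar-\intDim+1}\big) - \frac{\mu}{(\intDim+1)(\foldPar-\intDim+1)}$ is precisely the condition that makes the two inequalities in the proof of Theorem~\ref{thm:decodingRadius} compatible with the new $\degConstraint$, namely $\degConstraint \leq (\lenFLRS-t)(\foldPar-\intDim+1)$ together with $\lenFLRS(\foldPar-\intDim+1) < \degConstraint(\intDim+1)-\intDim(k-1)$, now with the $+\mu$ term. Rearranging shows this $t$-bound is equivalent, so Theorem~\ref{thm:decodingRadius} applies verbatim and $f$ is a solution of the root-finding system~\eqref{eq:improvedRootFindingSystemFLRS}. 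Then I would feed the same $t$-bound into Lemma~\ref{lem:intSolutionDim}: substituting $\degConstraint$ and the bound on $t$ into $d_I \geq \intDim(\degConstraint-k+1)-t(\foldPar-\intDim+1)$ yields $d_I \geq \mu$, so we have at least $\mu$ independent interpolation solutions $Q^{(1)},\dots,Q^{(\mu)}$ to build $\B$ from. Uniqueness of the recovered codeword holds exactly when $\B$ has full rank $k$, i.e. when root-finding returns a single solution; applying Lemma~\ref{lem:failureProbabilityBound} with $d_I \geq \mu$ (and noting that more rows only help the rank) gives the failure probability $\lesssim k(k/q^m)^{\mu}$, hence success probability at least $1 - k(k/q^m)^{\mu}$. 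The complexity claim is immediate: interpolation via skew K\"otter costs $\OCompl{\intDim\len^2}$, root-finding by back substitution $\OCompl{k^2}$, both dominated by $\OCompl{\intDim\len^2}$.

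I expect the only genuinely technical step to be the arithmetic reconciling the $t$-bound, the degree constraint, and the $d_I \geq \mu$ inequality --- everything else is a direct citation of earlier results. The main conceptual obstacle, already present in Lemma~\ref{lem:failureProbabilityBound}, is the heuristic independence-and-uniformity assumption on the coefficients of the $B_0^{(u)}$; I would simply carry that assumption forward (as the theorem statement does) rather than attempt to justify it, so the bound remains a heuristic approximation indicated by $\lesssim$.

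\begin{proof}
    Existence of a nonzero solution of Problem~\ref{prob:intProblemFLRS} with the modified $\degConstraint=\big\lceil \frac{\lenFLRS(\foldPar-\intDim+1)+\intDim(k-1)+\mu}{\intDim+1} \big\rceil$ follows as in Lemma~\ref{lem:degContstraintForExistence}, now with a slack of at least $\mu$ between the number of unknowns and equations, i.e. $\lenFLRS(\foldPar-\intDim+1) \leq \degConstraint(\intDim+1)-\intDim(k-1)-\mu$. The stated bound on $t$ is equivalent to $\degConstraint \leq (\lenFLRS-t)(\foldPar-\intDim+1)$ combined with the above inequality, so the argument of Theorem~\ref{thm:decodingRadius} applies and the transmitted $f$ satisfies~\eqref{eq:rootFindingEquationFLRS}, hence solves the root-finding system~\eqref{eq:improvedRootFindingSystemFLRS}. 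Substituting $\degConstraint$ and the bound on $t$ into Lemma~\ref{lem:intSolutionDim} gives $d_I \geq \intDim(\degConstraint-k+1)-t(\foldPar-\intDim+1) \geq \mu$, so at least $\mu$ independent interpolation solutions are available to form $\B$. Unique recovery of $\C$ holds precisely when $\rk_{q^m}(\B)=k$; by Lemma~\ref{lem:failureProbabilityBound} (using $d_I \geq \mu$), the failure probability $\mathbb{P}(\rk_{q^m}(\B)<k)$ is approximately at most $k\cdot(k/q^m)^{\mu}$, which yields the claimed success probability. Finally, solving the interpolation system with skew K\"otter interpolation costs $\OCompl{\intDim \len^2}$ operations in $\Fqm$ and the root-finding by back substitution costs $\OCompl{k^2}$, so the overall complexity is $\OCompl{\intDim \len^2}$.
    \qed
\end{proof}
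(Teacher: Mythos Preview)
Your proposal is correct and follows essentially the same approach as the paper. In fact, the paper does not supply an explicit proof of Theorem~\ref{thm:probabilisticUniqueDecoding}; it presents the theorem as a summary after introducing the threshold $\mu$ and the modified degree constraint, relying on the reader to combine Lemma~\ref{lem:degContstraintForExistence}, Theorem~\ref{thm:decodingRadius}, Lemma~\ref{lem:intSolutionDim}, and Lemma~\ref{lem:failureProbabilityBound} in exactly the way you do. Your arithmetic check that the new $t$-bound reconciles $\degConstraint \leq (\lenFLRS-t)(\foldPar-\intDim+1)$ with $\lenFLRS(\foldPar-\intDim+1)\leq \degConstraint(\intDim+1)-\intDim(k-1)-\mu$, and that these together force $d_I\geq\mu$ via Lemma~\ref{lem:intSolutionDim}, is precisely the missing verification the paper leaves implicit.
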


\section{Simulation Results} \label{sec:simulationResults}

We ran simulations in SageMath~\cite{sage} to empirically verify the heuristic upper bound for probabilistic unique decoding from Theorem~\ref{thm:probabilisticUniqueDecoding}.
We chose a $3$-folded FLRS code of length $N = 4$ and dimension $k = 2$ over $\mathbb{F}_{3^6}$ with $\shots = 2$ blocks.
Its minimum distance $4$ implies a unique decoding radius of $1.5$, whereas our probabilistic unique decoder allows to correct errors of weight $t = 2$ for $\intDim = 2$ and $\mu \in \{1, 2\}$ ($t \leq 2.17$ and $t \leq 2$, respectively).
We investigated the case $\mu = 1$ and collected $100$ decoding failures within about $4.23 \cdot 10^7$ randomly chosen error patterns.
The observed failure probability is hence about $2.36 \cdot 10^{-6}$, while the heuristic yields an upper bound of $5.49 \cdot 10^{-3}$.
Note that the parameter set is explicitly designed to obtain an experimentally observable failure probability.

We also tracked the distribution $\chi$ of the coefficients of the polynomials $B_0^{(u)}(x) \in \mathbb{F}_{729}[x]$ from~\eqref{eq:rf_polys} for $1 \leq u \leq \mu$ for multiple transmissions and computed the Kullback--Leibler divergence $D_{KL}$ with respect to the uniform distribution $\unif_{\F_{729}}$, which gives the number of additional bits needed to represent the approximated instead of the actual distribution (see e.g.~\cite[Sec. 2.3]{cover2006elementsOfInformationTheory}).
After $10^6$ transmissions using the above code with $\mu = 1$, the result $D_{KL}(\chi \,||\, \unif_{\F_{729}}) \approx 3.32 \cdot 10^{-4}$ bits shows that the measured distribution $\chi$ is remarkably close to $\unif_{\F_{729}}$.
This justifies the assumption in Theorem~\ref{thm:probabilisticUniqueDecoding}.

\section{Conclusion} \label{sec:conclusion}

We considered the construction of \acl{FLRS} codes and proposed an efficient interpolation-based decoding scheme that is capable of correcting errors beyond the unique decoding radius in the sum-rank metric.
The proposed algorithm can either be used as a (not necessarily polynomial-time) list decoder or as a probabilistic unique decoder that returns a unique solution with high probability.
We analyzed the interpolation-based decoding scheme and derived both an upper bound on the worst-case list size and a heuristic upper bound on the decoding failure probability.
The derivation of an upper bound on the failure probability that incorporates the distribution of the error matrices, a Justesen-like scheme for improved decoding of high-rate codes, and a comparison with decoding schemes for \emph{interleaved} \ac{LRS} codes are subject to future work.

The results in this paper can be extended to obtain more general code constructions over skew polynomial rings with derivations and/or codes with different block sizes.
In particular, \emph{lifted} \ac{FLRS} codes and their properties in the sum-subspace metric can be used for error control in random linear multishot network coding.
The construction of folded \emph{skew} Reed--Solomon codes and the transfer of the presented decoder to the skew metric are other open problems.

\vspace*{-10pt}
\bibliographystyle{splncs04}
\bibliography{references}

\end{document}